\newtheorem{Proposition}{Proposition}
\definecolor{violet}{rgb}{0.7,0,0.6}
\definecolor{OliveGreen}{RGB}{85,107,47}
\title{An Independence Test Based on Recurrence Rates. An empirical study and applications to real data}
\author{Juan Kalemkerian \\
Universidad de la República, Facultad de Ciencias\\ \\
Diego Fern\'andez\\
Universidad de la República, Facultad de Ciencias Económicas \\ y Administración}
\begin{document}
\maketitle

\begin{abstract}
In this paper we  propose several variants to perform the independence test between two random elements
based on recurrence rates. We will show how to calculate the test statistic in each one of
these cases.
From simulations we obtain that in high dimension, our test  clearly outperforms, in almost all cases, the other
widely used competitors.  
The test was performed on two data sets including small and large sample sizes and 
we show that in both cases the application of the test allows us to obtain interesting conclusions.

\end{abstract}

\noindent \textbf{Keywords: } 
independence tests, recurrence rates, U-process.
62H15, 62H20

  \section{Introduction} 
   Let $\left( X_{1},Y_{1}\right) ,\left( X_{2},Y_{2}\right) ,...,\left(
X_{n},Y_{n}\right) $ be an i.i.d. sample of $\left( X,Y\right) ,$ $X\in S_{X}$ and 
$Y\in S_{Y}$, where $S_{X}$ and $S_{Y}$ are metric spaces. Consider the
hypothesis $H_{0}$ which asserts that $X$ and $Y$ are independent random
elements: this is the so called independence test. Independence
tests were developed at first for the case of $S_{X}=S_{Y}=\mathbb{R}$, based on the pioneering work of Galton 
\cite{Galton} and Pearson
\cite{Pearson} (this is the famous correlation test, which is widely used today). The limitations of
this hypothesis test are well known and have motivated several different proposals
in this topic, such as the classical rank test (e.g. Spearman, \cite{Spearman}, Kendall, 
\cite{Kendall}, and Blomqvist, \cite{Blomqvist}).
The independence of random
vectors was addressed for the first time in Wilks \cite{Wilks}. 
Several independence tests between random vectors have been proposed by now, see, for example,
 \cite{Genest}, \cite{Kojadinovic}, \cite{Bilodeau} and \cite{Beran}. Gretton et al. \cite{Gretton} propose a universally
consistent test based on Hilbert--Schmidt norms.  This test has become very popular, and 
works well for random vectors in high dimensions. Another 
consistent and very popular test for random vectors is
proposed by Sz\'{e}kely et al. \cite{Szekely,Szekely2}, which defines the concept of distance
covariance. It has been used
and has had a considerable impact from the moment that it was proposed. This test has been adapted 
to work in high dimensions.
Not many independence test have been developed for random elements where $X$ and $Y$ lie in an arbitrary metric space.
Recently, in \cite{Fernandez}, an independence test based on recurrence rates was proposed.
This test is based on the following simple idea: if $X$ and $Y$ are independent, then 
$d_X(X_1,X_2)$ and $d_Y(Y_1,Y_2)$ are independent for all i.i.d. $(X_1,Y_1),(X_2,Y_2)$ with the same distribution 
as $(X,Y)$, where $d_X$ and $d_Y$ are the distance functions between elements of $S_X$ and $S_Y$, 
respectively. The authors proposed working with an $L^{2}$-Cram\'er--von Mises functional.
On the one hand, from the theoretical point of view, this test is interesting because it does not need any assumptions
on the topological structure of the metric spaces (e.g. assuming that they are Banach spaces),
and in \cite{Fernandez} can be found the first study where the mathematical properties of the
recurrence rates were established.
Marwan \cite{Marwan} gives a historical review
of recurrence plots techniques, together with everything developed from them.
However, the potential of these techniques has not \ yet been studied in depth from
the point of view of  mathematical statistics.
On the other hand, from the practical point of view, this test is interesting because it can be used
for $X$ and $Y$ lying in spaces of any dimension, and in \cite{Fernandez} a power
comparison shows the very good performance of this test compared to others widely used for 
random variables and random vectors.

In the present paper we will show, using a power comparison, that the 
independence test of recurrence rates in high dimension outperforms the other competing tests
in almost all cases,
and we will study the incidence of the distance functions considered ($d_X$ and $d_Y$)
in the performance of the test. As expected, we will show that the test statistic  in high dimension
has some 
sensitivity to the choice of the distance function, $d_X$ or $d_Y$.
Also, we will propose and compare other functionals to be
taken into account, such as an $L^{1}$-Cram\'er--von Mises functional and a Kolmogorov--Smirnov functional,
and we will show how to compute the statistic in each case.
Lastly, we will present  applications of the test  to two real data sets.
  
  The rest of this paper is organized as follows. In Section 2 we define the procedure to test $H_0$ vs $H_1$ proposed in \cite{Fernandez} and we present
  the definition of the test statistics based on a functional of the $L^{1}$-Cram\'er--von Mises type, and 
  the 
  statistics based on a functional of the Kolmogorov--Smirnov type, and three different distances to use in 
  $S_X$
  and $S_Y$. In Section 3, we show how to compute the different statistics presented in Section 2.
  In Section 4 we present a simulation study that compares, under several alternatives,
  the powers of the different tests of recurrence rates when varying the test statistics and the distance functions.
  In Section 5, we compare the performance of the recurrence test of independence with others in high dimension 
  and we show that our test clearly outperforms the rest in almost all cases.
  In Section 6, we present two applications of the recurrence test of independence to meteorological data,
  one of them with small sample size and the second on involving a huge data set, and we show the ability 
  of the recurrence rate test to obtain interesting conclusions. Some concluding remarks are given in Section 7
  and the proof of the validity of the formulas established in Section 3 can be found in Section 8.

  \section{Test Approach and Different Statistics to Consider}
  Let  $\left( X_{1},Y_{1}\right) ,\left( X_{2},Y_{2}\right) ,...,\left(
X_{n},Y_{n}\right) $ be i.i.d.  samples of $\left( X,Y\right) $ where $X\in S_X,$ $Y\in S_Y$, 
 $S_X$ and $S_Y$ are metric spaces, and suppose given $r,s>0.$ To simplify the notation and without 
risk of confusion, we will use the same letter $d$ for the distance function on both
metric spaces, $S_X$ and $S_Y$. 

We define the recurrence rate for the sample of $X$ and $Y$ as%
\[
RR_{n}^{X}(r):=\frac{1}{n^{2}-n}\sum_{i\neq j}\mathbf{1}_{\left\{ d \left ( 
X_{i},X_{j} \right  )<r\right\} }, \ \ \ \ \ 
RR_{n}^{Y}(s):=\frac{1}{n^{2}-n}\sum_{i\neq j}\mathbf{1}_{\left\{ d \left ( Y_i ,Y_j \right ) <s\right\} }, 
\]%
respectively, and the joint recurrence rate for $\left( X,Y\right) $ as
\[RR_{ n}^{X,Y}(r,s):=\frac{1}{n^{2}-n}\sum_{i\neq j}\mathbf{1}_{\left\{
d \left ( X_{i},X_{j} \right ) <r\text{ },\text{ }d \left ( Y_i ,Y_j \right ) <s \right\} }.\]
If we define $p_{X}(r):=P\left( d \left ( 
X_{1},X_{2} \right  )<r\right) $ the probability that the distance
between any two elements of the sample $X$ is less than $r$ and  $p_{X,Y}(r,s):=P\left( d \left ( 
X_{1},X_{2} \right  )<r,\text{ }d \left ( Y_{1},Y_{2} \right )
<s \right) $ the joint probability that the distance between any two elements of 
the sample $X$ is less than $r$ and any two elements of the sample $Y$ is less than $s,$
the strong law of large numbers for $U$-statistics  (\cite{Hoeffding}) allows us to affirm that for
any $r,s>0$, \begin{equation}\label{cs}
               RR_{n}^{X}(r)\overset{a.s.}{\rightarrow }p_{X}(r),  \ \ RR_{n}^{Y}(s)%
\overset{a.s.}{\rightarrow }p_{Y}(s) \ \ \text{and} \ \ RR_{n}^{X,Y}(r,s)\overset{a.s.}{%
\rightarrow }p_{X,Y}(r,s).
             \end{equation}
\bigskip\ We want to test $H_{0}:$ $X$ and $Y$ are independent, against $%
H_{1}:$ $H_{0}$ does not hold.

If $H_{0}$ is true, then $p_{X,Y}(r,s)=p_{X}(r)p_{Y}(s)$ for all $r,s>0$, and we expect
that if $n$ is large,  $RR_{n}^{X,Y}(r,s)\cong RR_{n}^{X}(r)RR_{n}^{Y}(s)$ for
any $r, s >0.$
In  \cite{Fernandez} it is proposed to reject $H_{0}$ when $T_n >c$, where 
\begin{equation}\label{cvm}
T_n:= n\int_0^{+ \infty} \int_0^{+ \infty}\left (
RR_{n}^{X,Y}(r,s)-RR_{n}^{X}(r)RR_{n}^{Y}(s)\right )^{2}dG(r,s),  
 \end{equation}%
where $c$ is a constant and $G$ is a properly chosen distribution function.
Observe that $T_n$ is a functional of the $L^{2}$ Cram\'er--von Mises type applied to the process 
 $\{E_n(r,s)\}_{r,s>0}$  where 
\begin{equation}\label{En}
 E_n(r,s):=\sqrt{n} \left (RR_{n}^{X,Y}(r,s)- RR_{n}^{X}(r)RR_{n}^{Y}(s) \right ).
\end{equation}

The theoretical results established in \cite{Fernandez}  about the process defined in (\ref{En}), 
are valid for any distance functions $d_X$ and $d_Y$, and remains
valid if we consider other continuous functionals such as an $L^{1}$-Cram\'er--von Mises type or
one of Kolmogorov--Smirnov type.

If $\left( X_{1},Y_{1}\right) ,\left( X_{2},Y_{2}\right) ,...,\left(
X_{n},Y_{n}\right) $ are i.i.d. in $S_{X}\times S_{Y},$ we will compare the $T_n$
statistic proposed in \cite{Fernandez}, which we will call $T_n^{(2)}$, defined by 
$$T_{n}^{(2)}:=n\int_{0}^{+\infty }\int_{0}^{+\infty
}\left( RR_{n}^{X,Y}\left( r,s\right) -RR_{n}^{X}\left( r\right)
RR_{n}^{Y}\left( s\right) \right) ^{2}dG(r,s)$$ with the statistics defined as 
\begin{equation*}
T_{n}^{\left( 1\right) }:=\sqrt{n}\int_{0}^{+\infty }\int_{0}^{+\infty
}\left\vert RR_{n}^{X,Y}\left( r,s\right) -RR_{n}^{X}\left( r\right)
RR_{n}^{Y}\left( s\right) \right\vert dG(r,s)
\end{equation*}%
and 
\begin{equation*}
T_{n}^{\left( \infty \right) }:=\sqrt{n}\sup_{r,s>0}\left\vert
RR_{n}^{X,Y}\left( r,s\right) -RR_{n}^{X}\left( r\right) RR_{n}^{Y}\left(
s\right) \right\vert .
\end{equation*}

Observe that in the general case in which $X$ and $Y$ lie in metric spaces $%
\left( S_{X},d_{X}\right) $ and $\left( S_{Y},d_{Y}\right) $, the statistics 
$T_{n}^{\left( 1\right) },T_{n}^{\left( 2\right) }$ and $T_{n}^{\left(
\infty \right) }$ depend on the distance functions $d_{X}$ and $d_{Y}.$ In
Section 4 we will compare the power under several alternative tests
based on $T_{n}^{\left( 1\right) },T_{n}^{\left( 2\right) }$ and $%
T_{n}^{\left( \infty \right) }$ for different distance functions $d_{X}$ and 
$d_{Y}.$

In the case in which $X$ and $Y$ are discrete time series, we will use the
classical $l^{1},l^{2}$ and $l^{\infty }$ distances, that is, $d_{X}\left(
x,x^{\prime }\right) =\sum_{n\geq 1}\left\vert x_{n}-x_{n}^{\prime
}\right\vert ,$ $d_{X}\left( x,x^{\prime }\right) =\sqrt{\sum_{n\geq
1}\left( x_{n}-x_{n}^{\prime }\right) ^{2}}$ and $d_{X}\left( x,x^{\prime
}\right) =\sup_{n\geq 1}\left\vert x_{n}-x_{n}^{\prime }\right\vert $ and
analogously for $d_{Y}.$ Analogously, when $X$ and $Y$ are continuous time
series, we will use the classical $L^{1},L^{2},L^{\infty }$ distances, that
is, $d_{X}\left( x,x^{\prime }\right) =\int_{-\infty }^{+\infty }\left\vert
x(t)-x^{\prime }(t)\right\vert dt,$ $d_{X}\left( x,x^{\prime }\right) =\sqrt{%
\int_{-\infty }^{+\infty }\left( x(t)-x^{\prime }(t)\right) ^{2}dt}$ and $%
d_{X}\left( x,x^{\prime }\right) =\sup_{t\in \mathbb{R}}\left\vert
x(t)-x^{\prime }(t)\right\vert .$ We will use the notation $T_{n}^{\left(
i,j\right) }$ where $i,j=1,2,\infty $ for the statistic $T_{n}^{\left(
i\right) }$ where the distance functions used are the $l^{j}$ $\left( \text{%
or }L^{j}\right) $ distance.

In all cases, as proposed in \cite{Fernandez}, we will use a
weight function $G$ such that $dG(r,s)=g_1(r)g_2(s)drds$ where $g_{1}$ and $g_{2}$  are $%
g_{1}\left( z\right) =\varphi \left( \frac{z-\mu _{X}}{\sigma _{X}}\right) $
with $\varphi $ being  the density function of an $N\left( 0,1\right) $ random
variable and $\mu _{X}=\mathbb{E}\left( d\left( X_{1},X_{2}\right) \right) ,$
$\sigma _{X}^{2}=\mathbb{V}\left( d\left( X_{1},X_{2}\right) \right) $ being 
$X_{1},X_{2}$ independent random variables with the same distribution as $X.$
Analogously, $g_{2}\left( t\right) =\varphi \left( \frac{t-\mu _{Y}}{\sigma
_{Y}}\right) .$ In practice  $\mu _{X}$ and  $\sigma _{X}$ are unknown, but
they can be estimated naturally by $\widehat{\mu }_{X}=\frac{1}{N}%
\sum_{i\neq j}d\left( X_{i},X_{j}\right) $ and $\widehat{\sigma }_{X}^{2}=%
\frac{1}{N}\sum_{i\neq j}\left( d\left( X_{i},X_{j}\right) -\widehat{\mu }%
_{X}\right) ^{2}$ where $N=n(n-1)$, and analogously with $\widehat{\mu }_{Y}$ and $\widehat{%
\sigma }_{Y}^{2}.$
\section{Computing the Statistics}
 Given $\left( X_{1},Y_{1}\right) ,\left( X_{2},Y_{2}\right) ,...,\left(
X_{n},Y_{n}\right) $  i.i.d. in $S_{X}\times S_{Y},$ and choosen the weight functions $g_1$, $g_2$
to be used, the statistic $T_n^{(1)}, T_n^{(2)}$ and $T_n^{(\infty)}$ can be computed in the steps
that indicated the following three propositions.

\begin{Proposition}\label{T2}
Calculation of $T_n^{(2)}$.\\
Step 1. Compute $d\left( X_{i},X_{j}\right) $ and $d\left(
Y_{i},Y_{j}\right) $ for all $i,j\in \left\{ 1,2,3,...,n\right\} $ where $%
i\neq j$ and put $N=n(n-1).$

Step 2. Re-order $\left\{ d\left( X_{i},X_{j}\right) \right\} _{i\neq j}$ as $%
Z_{1},Z_{2},...,Z_{N}$ such that $Z_{1}<Z_{2}<...<Z_{N}$ and $\left\{
d\left( Y_{i},Y_{j}\right) \right\} _{i\neq j}$ as $T_{1},T_{2},...,T_{N}$
maintaining the same indexing as $Z^{\prime }s$ (that is, if $d\left(
X_{i},X_{j}\right) =Z_{h}$ then $d\left( Y_{i},Y_{j}\right) =T_{h}$).

Step 3. Compute the order statistics for $T^{\prime }s,$ that is, $%
T_{1}^{\ast }<T_{2}^{\ast }<...<T_{N}^{\ast }.$

Step 4. Compute 
\begin{equation*}\label{a}
A_n=\frac{1}{N^{2}}\sum_{i=1}^{N}\sum_{j=1}^{N}\left( 1-G_1 \left( \max \left\{
Z_{i},Z_{j}\right\} \right) \right) \left( 1-G_2 \left( \max \left\{
T_{i},T_{j}\right\} \right) \right),
\end{equation*}
\begin{equation*}\label{b}
B_n=\left (1-\frac{1}{N^{2}}\sum_{i=1}^{N}\left( 2i-1\right) G_1
\left( Z_{i}\right)  \right )\left (1-\frac{1}{N^{2}}\sum_{i=1}^{N}\left( 2i-1\right) G_2
\left( T_{i}^{\ast }\right)  \right ),
\end{equation*}

\begin{equation*}\label{c}
C_n=\frac{1}{N^{3}}\sum_{i=1}^{N}\sum_{j=1}^{N}\sum_{k=1}^{N}\left( 1-G_1
\left( \max \left\{ Z_{i},Z_{j}\right\} \right) \right) \left( 1-G_2 \left(
\max \left\{ T_{i},T_{k}\right\} \right) \right) .
\end{equation*}

Step 5. Compute \begin{equation*}\label{Tn}
	T_n^{(2)}=n(A_n+B_n-2C_n).
\end{equation*}

\end{Proposition}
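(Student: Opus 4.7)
The plan is to expand the square inside the integral defining $T_n^{(2)}$ and compute each of the three resulting integrals separately using the empirical representations of $RR_n^X$, $RR_n^Y$ and $RR_n^{X,Y}$, combined with Fubini and the basic identity $\int_0^{+\infty}\mathbf{1}_{\{a<r\}}g_1(r)\,dr=1-G_1(a)$ (and the analogous formula for $g_2$, $G_2$).

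First I would write the pairs $\bigl(d(X_i,X_j),d(Y_i,Y_j)\bigr)_{i\neq j}$ in the paired form $(Z_k,T_k)_{k=1}^N$ given by Step 2 so that
\[
RR_n^X(r)=\frac{1}{N}\sum_{k=1}^N\mathbf{1}_{\{Z_k<r\}},\quad RR_n^Y(s)=\frac{1}{N}\sum_{k=1}^N\mathbf{1}_{\{T_k<s\}},\quad RR_n^{X,Y}(r,s)=\frac{1}{N}\sum_{k=1}^N\mathbf{1}_{\{Z_k<r,\,T_k<s\}}.
\]
Expanding $(RR_n^{X,Y}-RR_n^X RR_n^Y)^2$ gives three terms. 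For the first, $\int\!\!\int (RR_n^{X,Y})^2\,g_1 g_2$, the double sum times the indicator $\mathbf{1}_{\{\max(Z_i,Z_j)<r,\,\max(T_i,T_j)<s\}}$ integrates directly, after Fubini, to $A_n$. For the cross term $-2\int\!\!\int RR_n^{X,Y}RR_n^X RR_n^Y\,g_1 g_2$, the triple sum produces indicators $\mathbf{1}_{\{\max(Z_i,Z_j)<r\}}\mathbf{1}_{\{\max(T_i,T_k)<s\}}$, so the same argument yields exactly $-2C_n$.

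The only step that requires genuine bookkeeping is the third piece, $\int (RR_n^X)^2 g_1 \cdot \int (RR_n^Y)^2 g_2$, which after integration equals
\[
\frac{1}{N^2}\sum_{i,j}\bigl(1-G_1(\max(Z_i,Z_j))\bigr)\cdot\frac{1}{N^2}\sum_{i,j}\bigl(1-G_2(\max(T_i,T_j))\bigr).
\]
Here I would use the orderings $Z_1<\cdots<Z_N$ and $T_1^\ast<\cdots<T_N^\ast$ to count, for each index $i$, the number of $j$'s for which the maximum equals the $i$-th order statistic. Splitting into diagonal and off-diagonal pairs and grouping, one obtains
\[
\sum_{i,j}\bigl(1-G_1(\max(Z_i,Z_j))\bigr)=\sum_{i=1}^N(2i-1)\bigl(1-G_1(Z_i)\bigr)=N^2-\sum_{i=1}^N(2i-1)G_1(Z_i),
\]
which—after the identical computation on the $Y$-side using the separately sorted $T_i^\ast$—gives exactly the product $B_n$. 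Collecting the three pieces and multiplying by $n$ yields $T_n^{(2)}=n(A_n+B_n-2C_n)$.

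The conceptual content is routine (expand, Fubini, evaluate tail integrals); the step deserving care is reindexing $\sum_{i,j}(1-G_1(\max(Z_i,Z_j)))$ via the order statistics, and in particular remembering that this piece factorizes, so one must reorder $Z$ and $T$ independently, whereas in $A_n$ and $C_n$ the paired indexing $(Z_k,T_k)$ must be preserved. Making this distinction explicit is the main point I would emphasize to justify why $T_i$ in $A_n$ differs from $T_i^\ast$ in $B_n$.
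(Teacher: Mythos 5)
Your proposal is correct, and it is essentially the expected argument: the paper itself contains no proof of Proposition \ref{T2} --- its ``Proofs'' section simply defers to \cite{Fernandez} --- so there is nothing in-paper to compare line by line, but your expand-and-integrate derivation is the natural one and matches the technique the paper does display in its proof of Proposition \ref{T1} (reduction to the paired sequence $(Z_k,T_k)_{k=1}^{N}$, Fubini, and evaluation of tail integrals $\int_0^{+\infty}\mathbf{1}_{\{a<r\}}g_1(r)\,dr=1-G_1(a)$). All three pieces check out: the square of $RR_n^{X,Y}$ gives $A_n$, the cross term gives $-2C_n$, and for $B_n$ your counting $\sum_{i,j}\bigl(1-G_1(\max\{Z_i,Z_j\})\bigr)=\sum_{i=1}^{N}(2i-1)\bigl(1-G_1(Z_i)\bigr)=N^2-\sum_{i=1}^{N}(2i-1)G_1(Z_i)$ is exactly right, using that $\max\{Z_i,Z_j\}=Z_{\max\{i,j\}}$ once the $Z$'s are sorted and that the number of pairs with $\max\{i,j\}=m$ is $2m-1$. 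You also correctly isolate the one genuinely delicate point, which the proposition's notation otherwise obscures: $A_n$ and $C_n$ depend on the joint pairing and so must use the $T_k$'s carrying the $Z$-indexing, whereas the factorized term behind $B_n$ is invariant under permutations within each coordinate, which is precisely why the independently sorted order statistics $T_i^{\ast}$ appear there.
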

\begin{Proposition}\label{T1}
Calculation of $T_n^{(1)}$.\\
Step 1. Compute $d\left( X_{i},X_{j}\right) $ and $d\left(
Y_{i},Y_{j}\right) $ for all $i,j\in \left\{ 1,2,3,...,n\right\} $ where $%
i\neq j$ and put $N=n(n-1).$

Step 2. Re-order $\left\{ d\left( X_{i},X_{j}\right) \right\} _{i\neq j}$ as $%
Z_{1},Z_{2},...,Z_{N}$ such that $Z_{1}<Z_{2}<...<Z_{N}$ and $\left\{
d\left( Y_{i},Y_{j}\right) \right\} _{i\neq j}$ as $T_{1},T_{2},...,T_{N}$
maintaining the same indexing as $Z^{\prime }s$ (that is, if $d\left(
X_{i},X_{j}\right) =Z_{h}$ then $d\left( Y_{i},Y_{j}\right) =T_{h}$).

Step 3. Compute the order statistics for $T^{\prime }s,$ that is, $%
T_{1}^{\ast }<T_{2}^{\ast }<...<T_{N}^{\ast }.$

Step 4. For each $h,j\in \left\{ 1,2,3,...,N-1\right\} $ compute $%
c(h,j)=\sum_{i=1}^{h}\mathbf{1}_{\left\{ T_{i}<T_{j+1}^{\ast }\right\} }$,
that is, the number of elements of the vector $\left(
T_{1},T_{2},...,T_{h}\right) $ that are less than $T_{j+1}^{\ast }$ for $%
h,j=1,2,3,...,N-1.$

Step 5. Compute $$T_{n}^{\left( 1\right) }=\frac{\sqrt{n}}{N}%
\sum_{h,j=1}^{N-1}\left( G_{1}\left( Z_{h+1}\right) -G_{1}\left(
Z_{h}\right) \right) \left( G_{2}\left( T_{j+1}^{\ast }\right) -G_{2}\left(
T_{j}^{\ast }\right) \right) \left\vert c(h,j)-\frac{jh}{N}\right\vert .$$
\end{Proposition}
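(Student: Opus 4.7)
The plan is to exploit the fact that $RR_n^X(\cdot)$, $RR_n^Y(\cdot)$ and $RR_n^{X,Y}(\cdot,\cdot)$ are piecewise constant step functions whose jumps occur exactly at the observed pairwise distances. With the notation of Steps 1--3, $RR_n^X$ jumps at $Z_1<\cdots<Z_N$ while $RR_n^Y$ jumps at $T_1^{\ast}<\cdots<T_N^{\ast}$, so the integrand of $T_n^{(1)}$ is constant on every open rectangle of the form $(Z_h,Z_{h+1})\times(T_j^{\ast},T_{j+1}^{\ast})$. The integral therefore reduces to a finite sum over such rectangles, and it is enough to evaluate the integrand on each one and multiply by the corresponding $G$-mass.

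First I would dispose of the boundary regions. For $r<Z_1$ both $RR_n^X(r)$ and $RR_n^{X,Y}(r,s)$ vanish, and analogously for $s<T_1^{\ast}$; while for $r>Z_N$ one has $RR_n^X(r)=1$ and $RR_n^{X,Y}(r,s)=RR_n^Y(s)$, so that the integrand is again zero, and symmetrically for $s>T_N^{\ast}$. This reduces the integration domain to the rectangles indexed by $(h,j)\in\{1,\ldots,N-1\}^2$.

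Next, on a fixed interior rectangle $(Z_h,Z_{h+1})\times(T_j^{\ast},T_{j+1}^{\ast})$, simple counting gives $RR_n^X(r)=h/N$ and $RR_n^Y(s)=j/N$. The key step is the joint recurrence rate: because in Step 2 the $T_i$ inherit the indexing of the $Z_i$, the condition $Z_i<r$ is equivalent to $i\le h$, so
\[
N\cdot RR_n^{X,Y}(r,s)=\#\{i\le h:\,T_i<s\}=\#\{i\le h:\,T_i<T_{j+1}^{\ast}\}=c(h,j),
\]
where the middle equality uses that $s\in(T_j^{\ast},T_{j+1}^{\ast})$ implies the sets $\{T_i<s\}$ and $\{T_i<T_{j+1}^{\ast}\}$ coincide. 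Hence the integrand takes the constant value $\tfrac{1}{N}\bigl|c(h,j)-hj/N\bigr|$ throughout the rectangle.

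Finally, because $dG(r,s)=g_1(r)g_2(s)\,dr\,ds$ factors as a product measure, the $G$-mass of the rectangle is $(G_1(Z_{h+1})-G_1(Z_h))(G_2(T_{j+1}^{\ast})-G_2(T_j^{\ast}))$. Summing over $(h,j)\in\{1,\ldots,N-1\}^2$ and multiplying by $\sqrt n$ yields the claimed formula. The only non-bookkeeping step—and the one that deserves care—is the identification of $c(h,j)$ with $N\cdot RR_n^{X,Y}$ on the rectangle, which hinges on carefully distinguishing the original $Z$-indexed labelling of the $T_i$ from their order statistics $T_i^{\ast}$.
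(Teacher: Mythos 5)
Your proof is correct and follows essentially the same route as the paper: both reduce the integral to a finite sum by exploiting that the integrand is piecewise constant on the grid determined by the $Z_h$ and the $T_j^{\ast}$, with the key identification $N\cdot RR_n^{X,Y}(r,s)=c(h,j)$ on each rectangle $(Z_h,Z_{h+1})\times(T_j^{\ast},T_{j+1}^{\ast})$. The only cosmetic difference is that the paper computes the double integral iteratively (first in $r$, then in $s$) and leaves the vanishing of the boundary strips $r<Z_1$, $r>Z_N$, $s<T_1^{\ast}$, $s>T_N^{\ast}$ implicit, whereas you partition the plane into rectangles directly and verify the boundary contributions explicitly --- a slightly more careful write-up of the same argument.
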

\begin{Proposition}\label{Tinf}
 Calculation of $T_n^{(\infty)}$.\\
 Step 1. Compute $d\left( X_{i},X_{j}\right) $ and $d\left(
Y_{i},Y_{j}\right) $ for all $i,j\in \left\{ 1,2,3,...,n\right\} $ where $%
i\neq j$ and put $N=n(n-1).$

Step 2. Re-order $\left\{ d\left( X_{i},X_{j}\right) \right\} _{i\neq j}$ as $%
Z_{1},Z_{2},...,Z_{N}$ such that $Z_{1}<Z_{2}<...<Z_{N}$ and $\left\{
d\left( Y_{i},Y_{j}\right) \right\} _{i\neq j}$ as $T_{1},T_{2},...,T_{N}$
maintaining the same indexing as $Z^{\prime }s$ (that is, if $d\left(
X_{i},X_{j}\right) =Z_{h}$ then $d\left( Y_{i},Y_{j}\right) =T_{h}$).

Step 3. Compute the order statistics for $T^{\prime }s,$
that is,  $T_{1}^{\ast
}<T_{2}^{\ast }<...<T_{N}^{\ast }.$

Step 4. Compute the $\left( N-1\right) \times \left( N-1\right) $ matrix $C$
such that $$C_{ij}=\left\vert \sum_{k=1}^{N}\mathbf{1}_{\left\{ Z_{k}\leq
Z_{i},\text{ }T_{k}\leq T_{j}^{\ast }\right\} }-\frac{ij}{N}%
\right\vert .$$

Step 5. Compute $$T_{n}^{\left( \infty \right) }=\frac{%
\sqrt{n}}{N}\max_{i,j}C_{ij}.$$
\end{Proposition}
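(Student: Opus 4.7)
The plan is to reduce the supremum over $(r,s)\in(0,+\infty)^2$ to a maximum over a finite grid by exploiting the piecewise-constant nature of $RR_n^{X,Y}(r,s)-RR_n^X(r)RR_n^Y(s)$. With the relabeling of Steps 1--3, we can rewrite
\[
RR_n^X(r)=\tfrac{1}{N}\#\{k:Z_k<r\},\quad RR_n^Y(s)=\tfrac{1}{N}\#\{k:T_k^*<s\},\quad RR_n^{X,Y}(r,s)=\tfrac{1}{N}\sum_{k=1}^N\mathbf{1}_{\{Z_k<r,\,T_k<s\}},
\]
the joint rate necessarily using the unordered $T_k$ (paired with $Z_k$) rather than $T_k^*$. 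All three functions are piecewise constant, jumping only at $r\in\{Z_1,\ldots,Z_N\}$ or $s\in\{T_1^*,\ldots,T_N^*\}$, so their difference is constant on each open rectangle $(Z_i,Z_{i+1})\times(T_j^*,T_{j+1}^*)$, $0\le i,j\le N$, under the natural conventions $Z_0=T_0^*=0$ and $Z_{N+1}=T_{N+1}^*=+\infty$.

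Picking any point inside such a rectangle, strict orderings of the $Z_k$ and $T_k^*$ yield $RR_n^X=i/N$, $RR_n^Y=j/N$, and $\{k:Z_k<r,\,T_k<s\}=\{k:Z_k\le Z_i,\,T_k\le T_j^*\}$; hence the absolute difference on cell $(i,j)$ equals $C_{ij}/N$ exactly. The boundary indices are vacuous: $i=0$ or $j=0$ gives both the sum and $ij/N$ equal to zero, while $i=N$ (resp.\ $j=N$) makes $RR_n^X=1$ (resp.\ $RR_n^Y=1$) so the joint and product rates agree. Consequently the supremum is attained at some $i,j\in\{1,\ldots,N-1\}$, and multiplying by $\sqrt n$ produces the stated identity $T_n^{(\infty)}=(\sqrt n/N)\max_{i,j}C_{ij}$.

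The only delicate point is reconciling the strict inequality used in the definitions of $RR_n^X,RR_n^Y,RR_n^{X,Y}$ with the non-strict inequality appearing in $C_{ij}$: this is resolved by evaluating strictly inside each rectangle, where distinctness of the $Z_k$ (and of the $T_k^*$) makes $\{Z_k<r\}=\{Z_k\le Z_i\}$ and analogously for the $T_k^*$; under the continuous distributions assumed, distinctness holds almost surely, and in degenerate cases ties can be broken arbitrarily without affecting the statistic. Beyond this bookkeeping I do not foresee any substantive obstacle, as the argument is a direct accounting of a piecewise constant function on a finite grid.
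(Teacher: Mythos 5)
Your proof is correct and follows essentially the same route as the paper: the published proof simply observes that, by piecewise constancy, it suffices to evaluate $\sqrt{n}\left\vert RR_{n}^{X,Y}(r,s)-RR_{n}^{X}(r)RR_{n}^{Y}(s)\right\vert$ at the grid points $(Z_{i},T_{j}^{\ast})$, which is precisely your cell-by-cell reduction. You merely carry out in detail what the paper asserts in two lines (the constancy on open rectangles, the vanishing boundary cells $i,j\in\{0,N\}$, and the reconciliation of strict versus non-strict inequalities under the distinctness assumed in Step 2).
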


\section{Simulation Study}
When $X$ and $Y$ lie in high dimensional spaces, it is interesting to analyse the performance of the test 
statistics $T_n^{(1)},T_n^{(2)}$ and 
$T_n^{(\infty)}$ for different distance functions $d_X$ and $d_Y$.
In this section we will compare the power of the $9$ test statistics $T_n^{(i,j)}$ for $i,j=1,2,\infty$ in 
the cases in which $X$ and $Y$ are
discrete and continuous time series under several alternatives. In all cases we will use the same distance function for $X$ and $Y$, that is, if $X$ and
$Y$ are discrete time series, then we will use $l^{j}$ for both $X$ and $Y$ for $j=1,2,\infty$, and 
analogously in the case in which $X$ and $Y$
are continuous time series. In all cases, $X$ and $Y$ are time series of length $100$ and the power (due to the 
computational cost) was calculated at the
 $5 \%$ level from $500$ replications. Every $p$-value was calculated by a
permutation method (as suggested in  \cite{Fernandez}) for $100$ replications. 
\subsection{The discrete case}
We analyse two scenarios for $X$: one of them is when $X$ is  AR$(1)$ where $\phi=0.1$, which 
we call simply AR$(0.1)$ and the second case, is when $X$ is ARMA$(2,1)$ with parameters
$\phi=(0.2,0.5)$ and $\theta=0.2$. In both cases we consider three possible $Y$: $Y_{1}=X^{2}+3\varepsilon$,
$Y_2=\sqrt{|X|}+\sigma\varepsilon$  where $\sigma^{2}$  means
the variance of $\sqrt{|X|}$ and
$Y_3=\varepsilon X$. In all cases, $\varepsilon$ is a standard Gaussian white noise $(N(0,1))$ independent of
$X$. In Table \ref{arn30} and Table \ref{arn50} we show the power for $n=30$ and
$n=50$, respectively, in the case in which $X$ is an AR$(0.1)$
process for the 9 tests considered. Similarly  Table \ref{arman30} and Table \ref{arman50} show the power for
the case in which $X$ is an ARMA$(2,1)$ process. Tables 1--4 do 
not show important differences between using $T_n^{(2)}$, $T_n^{(1)}$ or $T_n^{(\infty)}$.  In Figure
\ref{powerarma} we show  the power as a function of sample size, where the statistic considered is $T_n^{(2)}$, that is $T_n^{(2,1)}$, $T_n^{(2,2)}$
and $T_n^{(2,\infty)}$. The behaviour of $T_n^{(1)}$ and $T_n^{(2)}$ is similar. Figure \ref{powerarma}
suggest that the power increases as the distance
function considered goes from $d_\infty$ ($l^{\infty}$ distance) to $d_1$ ($l^{1} $ distance). Also for the
alternative $Y=Y_2$, the statistic 
based on the $l^{\infty}$ distance has difficulties in detecting the dependence between $X$ and $Y$ 
(which grows very slowly as $n$ increases), while for $n=60$ the power of the test based on $l^{1}$ or $l^{2}$
distances is near unity.

\begin{table*}[ht]
\caption {Comparison of powers,  at the $5 \%$ level, for the different tests, where  $X$ is AR$(0.1)$ and
$Y_{1}=X^{2}+3\varepsilon$, $Y_2=\sqrt{|X|}+\sigma
\varepsilon$, $Y_3=\varepsilon X$  for sample size of $n=30$.}
\label{arn30}

\centering
\begin{small}\begin{tabular}{|c|ccc|ccc|ccc|}

    \hline
  $n=30$ & $T_{n}^{(1,1)}$  & $T_{n}^{(1,2)}$  & $T_{n}^{(1,\infty)}$  & $T_{n}^{(2,1)}$  & $T_{n}^{(2,2)}$  & $T_{n}^{(2,\infty)}$  &$T_{n}^{(\infty,1)}$  &  $T_{n}^{(\infty, 2)}$ & $T_{n}^{(\infty,\infty)}$ \\
   \hline 
   $Y=Y_1$ &0.39 & \textbf{0.40} & 0.29 &0.39 &\textbf{0.40} &0.24 & 0.32 & 0.28 & 0.16 \\
   \hline
   $Y=Y_2$ &0.45 & 0.22 & 0.10 &\textbf{0.71} &0.52 &0.11 &0.69 &0.19 &0.05  \\
   \hline
   $Y=Y_3$ &0.91 & 0.79 & 0.28 & 0.87 & 0.77 & 0.34  & \textbf{0.92} & 0.77 & 0.27 \\
  
   \hline
   
\end{tabular}             \end{small}
\end{table*}

\begin{table*}[ht]

\caption {Comparison of powers,  at the $5 \%$ level, for the different tests, where  $X$ is AR$(0.1)$ and
$Y_{1}=X^{2}+3\varepsilon$, $Y_2=\sqrt{|X|}+\sigma
\varepsilon$, $Y_3=\varepsilon X$  for sample size of $n=50$.}
\label{arn50}

\centering
\begin{small}\begin{tabular}{|c|ccc|ccc|ccc|}
		
		\hline
		$n=50$ & $T_{n}^{(1,1)}$  & $T_{n}^{(1,2)}$  & $T_{n}^{(1,\infty)}$  & $T_{n}^{(2,1)}$  & $T_{n}^{(2,2)}$  & $T_{n}^{(2,\infty)}$  &$T_{n}^{(\infty,1)}$  &  $T_{n}^{(\infty, 2)}$ & $T_{n}^{(\infty,\infty)}$ \\
		\hline 
		$Y=Y_1$ & 0.90 & \textbf{0.92} &0.74 & 0.54 &0.59 &0.47 &0.49 & 0.57  & 0.34 \\
		\hline
		$Y=Y_2$ &0.34 &0.50  &0.38 & \textbf{0.97} &0.81 &0.16 &0.92 & 0.89 &  0.79 \\
		\hline
		$Y=Y_3$ &\textbf{1.00 }&0.94  &0.64 & \textbf{1.00} &0.94 &0.61 &0.99 & 0.92  & 0.57 \\
		
		\hline
		
\end{tabular}             \end{small}
\end{table*}

\begin{table*}[ht]
\caption {Comparison of powers,  at the $5 \%$ level, for the different tests, where  $X$ is ARMA$(2,1)$ and
$Y_{1}=X^{2}+3\varepsilon$, $Y_2=\sqrt{|X|}+\sigma
\varepsilon$, $Y_3=\varepsilon X$  for sample size of $n=30$.}\label{arman30}

\centering
\begin{small}\begin{tabular}{|c|ccc|ccc|ccc|}

    \hline
   $n=30$ & $T_{n}^{(1,1)}$  & $T_{n}^{(1,2)}$  & $T_{n}^{(1,\infty)}$  & $T_{n}^{(2,1)}$  & $T_{n}^{(2,2)}$  & $T_{n}^{(2,\infty)}$  &$T_{n}^{(\infty,1)}$  &  $T_{n}^{(\infty, 2)}$ & $T_{n}^{(\infty,\infty)}$ \\
   \hline 
   $Y=Y_1$ &\textbf{0.85} & 0.82 & 0.53 &0.83 &0.78 &0.57 & 0.77 & \textbf{0.85} & 0.47 \\
   \hline
   $Y=Y_2$ &0.56 & 0.27 & 0.08 &\textbf{0.84} &0.78 &0.34 &0.49 &0.34 &0.08  \\
   \hline
   $Y=Y_3$ &\textbf{1.00 }& \textbf{1.00} & 0.93 & 0.99 & 0.93 & 0.50  & 0.96 & 0.88 & 0.23 \\
  
   \hline
   
\end{tabular}             \end{small}
\end{table*}

\begin{table*}[ht]

\caption {Comparison of powers,  at the $5 \%$ level, for the different tests, where  $X$ is ARMA$(2,1)$ and
$Y_{1}=X^{2}+3\varepsilon$, $Y_2=\sqrt{|X|}+\sigma
\varepsilon$, $Y_3=\varepsilon X$  for sample size of  $n=50$.}\label{arman50}

\centering
\begin{small}\begin{tabular}{|c|ccc|ccc|ccc|}
		
		\hline
		$n=50$ & $T_{n}^{(1,1)}$  & $T_{n}^{(1,2)}$  & $T_{n}^{(1,\infty)}$  & $T_{n}^{(2,1)}$  & $T_{n}^{(2,2)}$  & $T_{n}^{(2,\infty)}$  &$T_{n}^{(\infty,1)}$  &  $T_{n}^{(\infty, 2)}$ & $T_{n}^{(\infty,\infty)}$ \\
		\hline 
		$Y=Y_1$ &\textbf{0.98} &\textbf{0.98}  & 0.82 &0.96 &0.97 &0.82 & 0.96 &0.96 & 0.82 \\
		\hline
		$Y=Y_2$ &0.76 &0.48  & 0.04 &\textbf{0.98} &\textbf{0.98} &0.43 & 0.65 &0.43 & 0.03 \\
		\hline
		$Y=Y_3$ &\textbf{1.00} &\textbf{1.00}  &0.74  &\textbf{1.00} &\textbf{1.00} &0.77 & \textbf{1.00} &\textbf{1.00} & 0.75 \\
		
		\hline
		
\end{tabular}             \end{small}
\end{table*}

\begin{figure}[h!]
\begin{center}
\begin{tabular}{ll}
\includegraphics[scale=0.7]{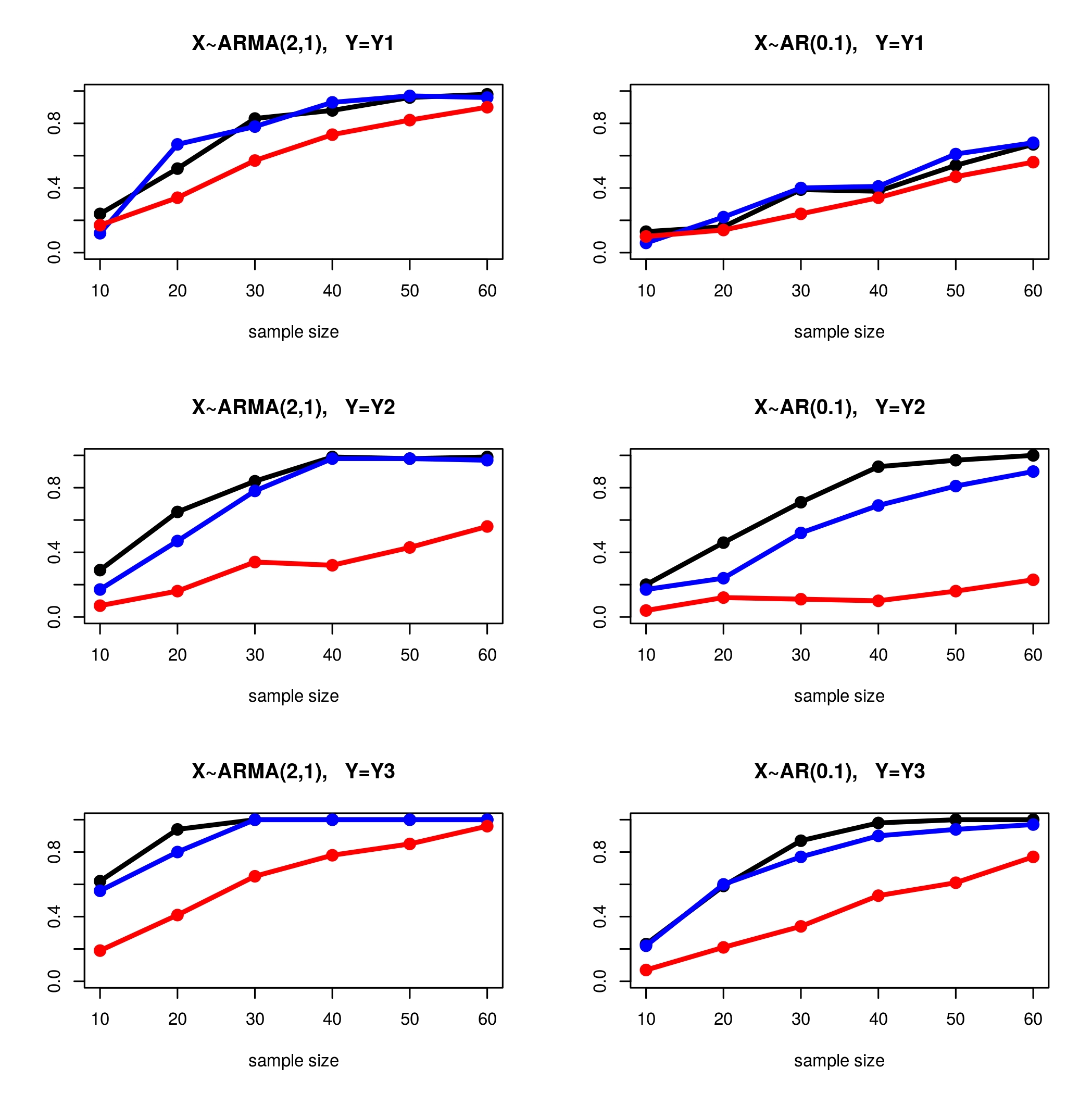}

\end{tabular}
\end{center}
\caption{Power at  $5 \%$ level under several alternatives for the statistic $T_n^{(2)}$ using Manhattan distance ($T_n^{(2,1)}$ in black),
Euclidean distance ($T_n^{(2,2)}$ in blue) and maximum distance ($T_n^{(2,\infty)}$ in red). $Y_1$, $Y_2$ 
and $Y_3$ are defined  in Tables 1--4.}\label{powerarma}
\end{figure}

\subsection{The continuous case}
In this subsection, we will take $X$ to be a fractional Brownian motion with $\sigma =1$ observed in $[0,1]$ (at times $0,1/100,2/100,...,99/100$)
for $H=0.5$ (standard Brownian motion) and $H=0.7$. We consider $7$ dependence cases  between $X$ and $Y$. The first three are for the case in which
$X$ is a standard Brownian motion ($Bm$) and the dependence is defined by  $Y_{1}=X^{2}+3\varepsilon$, $Y_2=\sqrt{|X|}+
\varepsilon$, $Y_3=\varepsilon X+3\varepsilon'$ where $\varepsilon$ and $\varepsilon'$ are Gaussian white 
noises with $\sigma=1$ such that $X,\varepsilon$
and $\varepsilon'$ are independent. In the last $4$ alternatives, we explore the power when $Y$ is a linear 
functional of $X$.
More explicitly, we will consider the case in which $Y$ is a fractional Ornstein--Uhlenbeck process driven by a 
Brownian motion ($X$) for $H=0.5$ ($Bm$) and fractional Brownian motion for $H=0.7$ ($fBm$), 
which we call  the $OU$ and $FOU$ processes, respectively. 
A particular, a linear combination of $FOU$, which we call $FOU(2)$, and whose definition, theoretical 
development and simulations are
found in \cite{Kalemkerian} and \cite{Kalemkerian_discreto},  is a particular case
of the models proposed in \cite{Arratia}.  More explicitly,  the $FOU$ process is defined by 
$Y_t=\sigma \int_{-\infty}^{t}e^{-\lambda (t-s)}dX_s$ 
(where $X=\{X_{t}\}$ is an $fBm$), and the $FOU(2)$ process is defined by 
$Y_t= \dfrac{\lambda_{1}}{\lambda_{1}-\lambda_{2}}\sigma \int_{-\infty}^{t}e^{-\lambda_{1} (t-s)}dX_s+
\dfrac{\lambda_{2}}{\lambda_{2}-\lambda_{1}}\sigma \int_{-\infty}^{t}
e^{-\lambda_{2} (t-s)}dX_s $  (where $X=\{X_{t}\}$ is an $fBm$). When $H=0.5$,
we will call them simply the  $OU$ and $OU(2)$ processes, as defined in \cite{Arratia}.
Tables \ref{mBfn30} and \ref{mBfn50} gives us the power for $n=30$ and $n=50$, respectively, for the $7$
alternatives. In these tables, $Y_4$ means an
$ OU$ process driven by $X$ with parameters $\sigma=1, \lambda=0.3$. Similarly $Y_5$ is an $ FOU$ 
process with  parameters $\sigma=1, H=0.7, \lambda=0.3$, 
$Y_6 \sim OU(2)$ with parameters $\sigma=1, \lambda_{1}=0.3, \lambda_{2}=0.8$ and $Y_7 \sim FOU(2)$ with parameters $\sigma=1,
H=0.7, \lambda_{1}=0.3, \lambda_{2}=0.8$. In this paper we do not present the performance of the test 
for other choices of the parameters, because the 
behaviour is similar. As expected, for values of $\sigma$ larger than $1$, the dependence between $X$ and $Y$ is more difficult to detect, and we need
to increase the sample size. The same occurs if we take $\lambda_1$ near to $\lambda_2$ in $OU(2)$ and $FOU(2)$.
Tables \ref{mBfn30} and \ref{mBfn50} show, as in the discrete case,  no substantial differences
between  the performance of the three statistics ($T_n^{(1)}, T_n^{(2)}$ or $T_n^{(\infty)}$). With respect 
to which distance between elements of $X$ and $Y$ is more appropriate, Table  \ref{mBfn30} and Table 
\ref{mBfn50} show that
the $L^{\infty}$ distance performs poorly under the alternatives $Y_1, Y_2$ and $Y_3$, but better 
 under alternatives 
$Y_4, Y_5, Y_6 $ and $Y_7$. The performance of the $L^{1}$ and $L^{2}$ distances is similar throughout the 
$7$ alternatives. Figure 
\ref{powerT1T2Tinf} expands the information given in Table \ref{mBfn30} and Table \ref{mBfn50} for the cases 
$Y_1,Y_2$ and $Y_3$ because it shows us
 the power for the statistics $T_n^{(i,j)}$ for $i,j=1,2,\infty$ for sample sizes of $n=10$ to $n=50$. Figure 
\ref{powerT1T2Tinf} show clearly that the $L^{\infty}$ distance has a 
poorer performance  than the $L^{1}$ and $L^{2}$ distances. 
Figure \ref{powerfou} shows the power as a function of sample size for the statistic $T_n^{(2)}$ in the cases $Y_4, Y_5, Y_6$ and $Y_7$. The 
behaviour of the statistics $T_n^{(1)}$ and  $T_n^{(2)}$ is similar. Contrary to what happened in the cases
$Y_1,Y_2$ and $Y_3$, the 
$L^{\infty}$ distance performs clearly better than the $L^{1}$ and $L^{2}$ distances. Also, 
Figure \ref{powerfou} show that the performance
of the $T_n^{(2)}$ statistic increases as we move from the use of the $L^{1}$ distance to the use of the 
$L^{\infty}$ distance. On the other hand, Figure \ref{powerfou} shows that
the power in the case of the $OU$ alternative is higher than for the $OU(2)$ alternative (and the same 
for $FOU$ versus $FOU(2)$), which is reasonable, because
the dependence between $X$ and $Y$ is  simpler in the $OU$ ($FOU$) case than in the $OU(2)$ ($FOU(2)$) case.
Also, the power in the 
$OU$ ($OU(2)$) case is higher than in the  $FOU$ ($FOU(2)$) case, 
which is to be  expected because when $H=0.7$, the fractional Brownian motion has a long 
range dependence, therefore it is reasonable that the dependence between $X$ and $Y$ is more difficult to detect.\\
To conclude this section, observe that the test of independence based on recurrence rates has a power that 
grows as $n$ grows for the $9$ statistics
considered, $T_n^{(i,j)}$ for $i,j=1,2,\infty$ (as expected according to the theory developed in \cite{Fernandez}) in all the alternatives considered 
for both the discrete and the continuous cases. In most of the cases, the test has a power near to
unity for  moderately small sample sizes. Taking into account
what was observed in this section, it can be said that there is no preference to use the test based 
on $T_n^{(1)}, T_n^{(2)}$ or $T_n^{(\infty)}$, but 
in the three cases, in general the performance is better as the function distance goes from the
$L^{1}$ ($l^{1}$) distance to the $L^{\infty}$ ($l^{\infty}$) 
distance in 
some cases, and in the opposite direction for other cases. Therefore,  it can be suggested that one
 use the  test
statistic using  the $L^{1}$ ($l^{1}$) or $L^{2}$ ($l^{2}$) distance
and the $L^{\infty}$ ($l^{\infty}$) distance to cover both possibilities.

\begin{table*}[ht]
\caption {Comparison of powers,  at the $5 \%$ level, of the different tests, where $X\sim Bm$ in alternatives 
$Y_1,Y_2,Y_3,Y_4,Y_6$ and $X\sim fBm$ with $H=0.7$ in alternatives $Y_5, Y_7$ where $Y_{1}=X^{2}+3\varepsilon$, $Y_2=\sqrt{|X|}+
\varepsilon$, $Y_3=\varepsilon X+3\varepsilon'$, $Y_4=OU, Y_5=FOU, Y_6=OU(2),$ and $ Y_7=FOU(2)$  for 
sample size of $n=30$.}
\label{mBfn30}

\centering
\begin{small}\begin{tabular}{|c|ccc|ccc|ccc|}

    \hline
   $n=30$ & $T_{n}^{(1,1)}$  & $T_{n}^{(1,2)}$  & $T_{n}^{(1,\infty)}$  & $T_{n}^{(2,1)}$  & $T_{n}^{(2,2)}$  & $T_{n}^{(2,\infty)}$  &$T_{n}^{(\infty,1)}$  &  $T_{n}^{(\infty, 2)}$ & $T_{n}^{(\infty,\infty)}$ \\
   \hline 
   $Y=Y_1$ &0.70 & 0.58 & 0.38 &0.79 &0.76 &0.57 & 0.66 & \textbf{0.83} & 0.47 \\
   \hline
   $Y=Y_2$ &0.44 & 0.43 & 0.27 &0.51 &0.52 &0.22 &0.51 &\textbf{0.54} &0.22  \\
   \hline
   $Y=Y_3$ &0.33 & \textbf{0.42}& 0.29 & \textbf{0.42} & 0.41 & 0.19  & 0.39 & 0.37 & 0.20 \\ 
     \hline
     $Y=Y_4$ &0.69 &0.79 &\textbf{0.92} &0.58 &0.67 &0.96 &0.43 &0.56 &0.54 \\
     \hline
     $Y=Y_5$ &0.30 &0.37 &0.53 &0.30 &0.46 &\textbf{0.81} & 0.54 &0.56 &0.25\\
     \hline
     $Y=Y_6$ &0.16 &0.21 &0.15 &0.31 &0.38 &\textbf{0.74} &0.17 &0.16 &0.18 \\
     \hline
     $Y=Y_7$ &0.07 &0.15 &\textbf{0.95} &0.18 &0.21 &0.43 &0.07 &0.11 &0.02 \\
     \hline

\end{tabular}             \end{small}
\end{table*}

\begin{table*}[ht]

\caption {Comparison of powers,  at the $5 \%$ level, of the different tests, where $X\sim Bm$ in alternatives 
$Y_1,Y_2,Y_3,Y_4,Y_6$ and $X\sim fBm$ with $H=0.7$ in alternatives $Y_5, Y_7$ where $Y_{1}=X^{2}+3\varepsilon$, $Y_2=\sqrt{|X|}+
\varepsilon$, $Y_3=\varepsilon X+3\varepsilon'$, $Y_4=OU, Y_5=FOU, Y_6=OU(2),$ and $ Y_7=FOU(2)$  for 
sample size of $n=50$.}
\label{mBfn50}

\centering
\begin{small}\begin{tabular}{|c|ccc|ccc|ccc|}
		
		\hline
		$n=50$ & $T_{n}^{(1,1)}$  & $T_{n}^{(1,2)}$  & $T_{n}^{(1,\infty)}$  & $T_{n}^{(2,1)}$  & $T_{n}^{(2,2)}$  & $T_{n}^{(2,\infty)}$  &$T_{n}^{(\infty,1)}$  &  $T_{n}^{(\infty, 2)}$ & $T_{n}^{(\infty,\infty)}$ \\
		\hline 
		$Y=Y_1$ &0.90 &0.91  & 0.90 &0.93 &\textbf{0.95} &0.84 & 0.89 &0.92 & 0.79 \\
		\hline
		$Y=Y_2$ &0.70 &0.78  & 0.46 &0.82 &0.80 &0.44 & 0.63 &\textbf{0.85} & 0.33 \\
		\hline
		$Y=Y_3$ &\textbf{0.68} & 0.67  & 0.41 &0.51 &0.62 &0.44 & 0.61 &0.67 & 0.38 \\
		
		\hline
		 \hline
     $Y=Y_4$ &\textbf{1.00} &0.86 &\textbf{1.00} &0.75 &0.92 &0.99 &0.74 &0.70 &0.31 \\
     \hline
     $Y=Y_5$ &0.33 &0.46 &0.97 &0.36 &0.58 &\textbf{0.98} & 0.70 &0.64 &0.29\\
     \hline
     $Y=Y_6$ &0.40 &0.48 &0.46 &0.39 &0.58 &\textbf{0.95} &0.46 &0.55 &0.78 \\
     \hline
     $Y=Y_7$ &0.06 &0.30 &\textbf{1.00} &0.22 &0.33 &0.72 &0.17 &0.21 &0.32 \\
     \hline
\end{tabular}             \end{small}
\end{table*}

\begin{figure}[h!]
\begin{center}
\begin{tabular}{ll}
\includegraphics[scale=0.8]{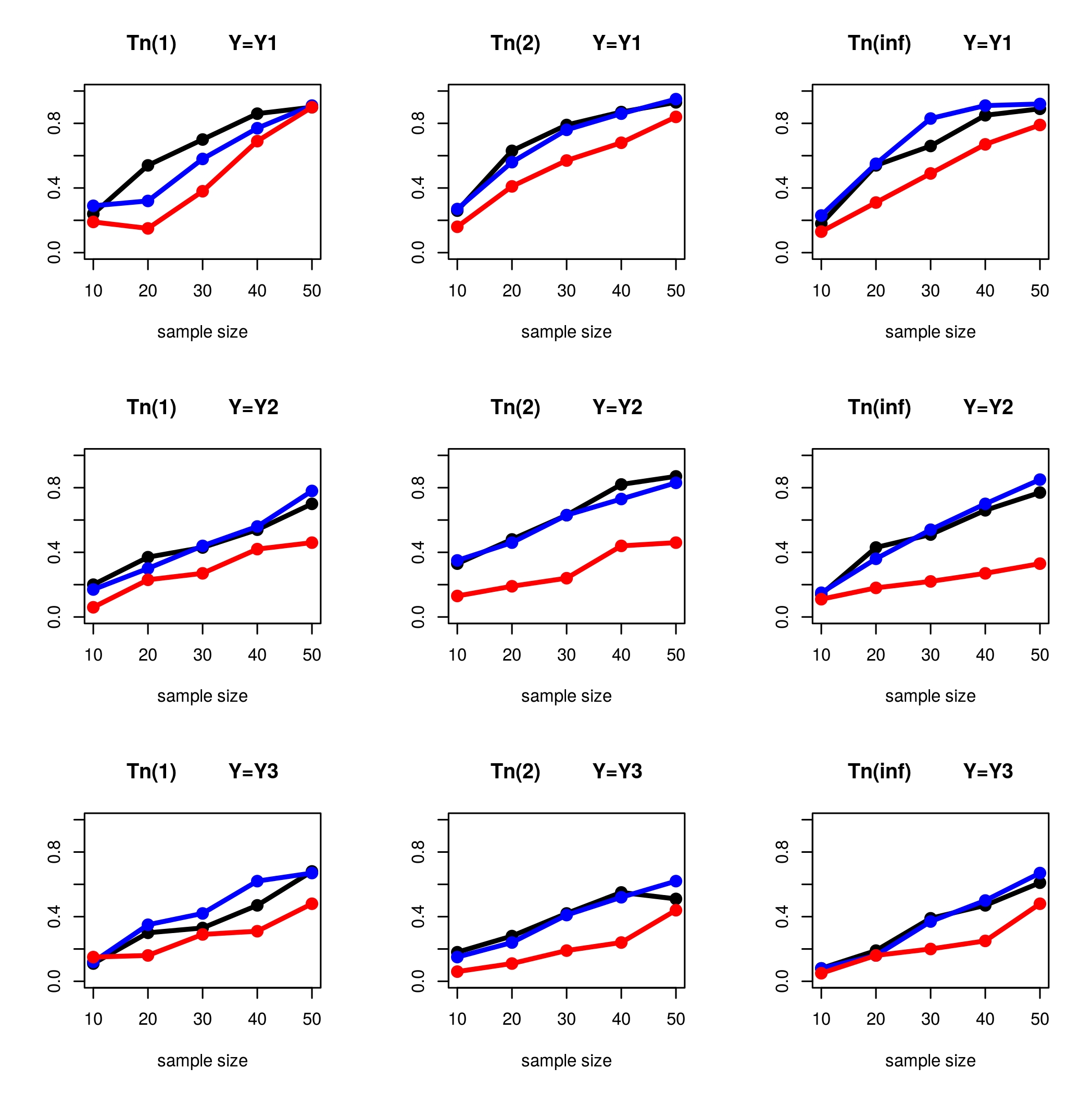}

\end{tabular}
\end{center}
\caption{Comparison of powers,  at the $5 \%$ level,  where $X$ is a standard Brownian motion,
under several alternatives for the statistics $T_n^{(1)},
T_n^{(2)}$ and $T_n^{(\infty)}$ using the Manhattan distance ($T_n^{(i,1)}$ in black),
Euclidean distance ($T_n^{(i,2)}$ in blue) and maximum ($T_n^{(i,\infty)}$ in red) for $i=1,2,\infty$. 
$Y_1, Y_2$ and $Y_3$ are defined as in Tables \ref{mBfn30} and \ref{mBfn50}.}\label{powerT1T2Tinf}
\end{figure}

\begin{figure}[h!]
\begin{center}
\begin{tabular}{ll}
\includegraphics[scale=0.7]{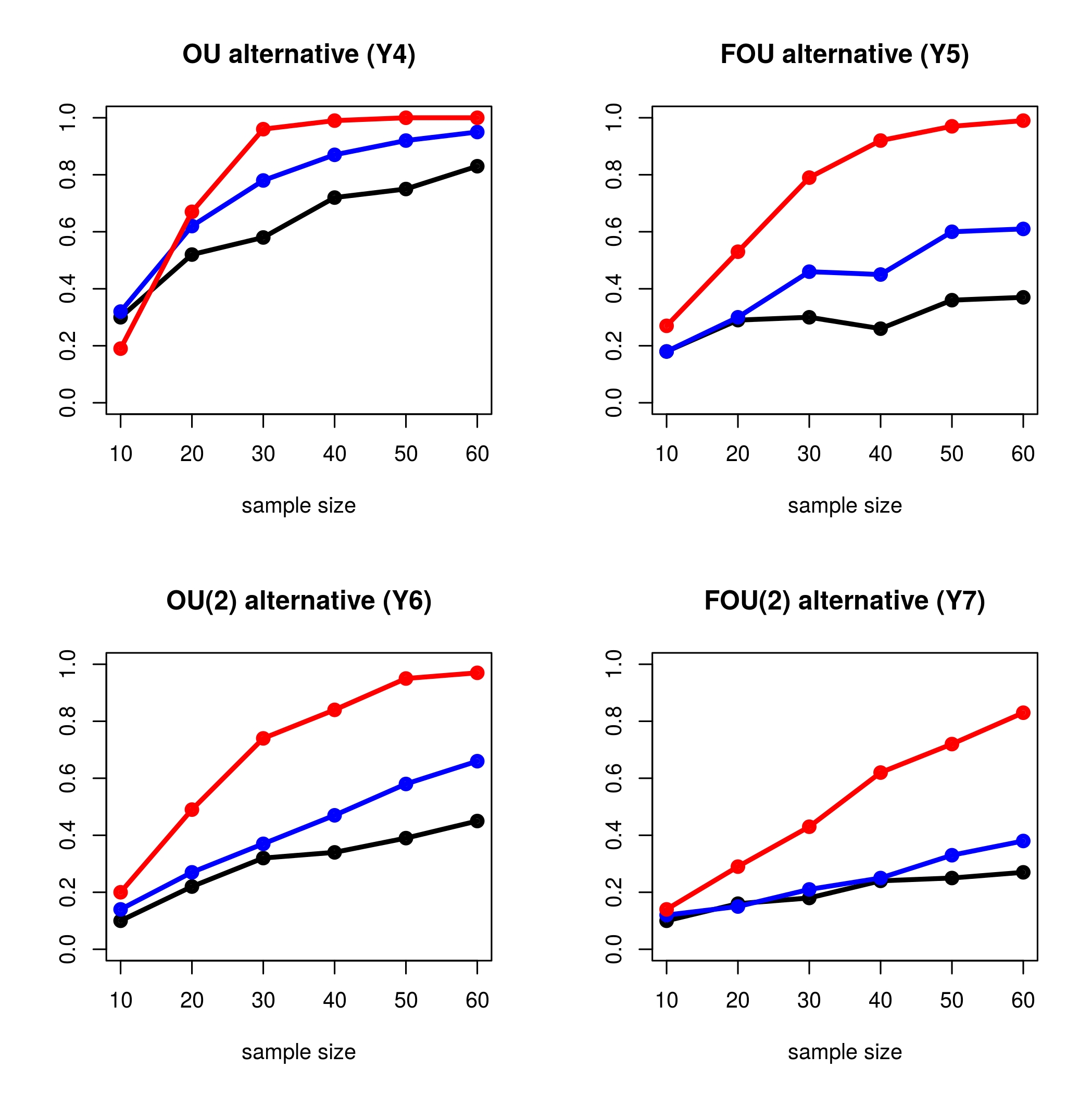}

\end{tabular}
\end{center}
\caption{Power at the $5 \%$ level, under several alternatives for the statistic $T_n^{(2)}$ using
the Manhattan distance ($T_n^{(2,1)}$ in black),
Euclidean distance ($T_n^{(2,2)}$ in blue) and maximum ($T_n^{(2,\infty)}$ in red).}\label{powerfou}
\end{figure}

\section{Comparison with Other Tests in High Dimension}
In \cite{Fernandez}  the very good performance of the recurrence rates test for random variables 
and random vectors was shown.
In this section, we will compare our test when $X$ and $Y$ lie in high dimensional spaces.
According to what was shown in the previous section, we have considered the test using the $T_n^{(2,2)}$
and $T_n^{(2,\infty)}$ statistics.
We will consider three competitors: the well known distance covariance test
proposed in \cite{Szekely} and adapted to perform better in high dimensions in \cite{Szekely2013},
the Hilbert--Schmidt Information Criterion proposed in \cite{dhsic}, and that  proposed more recently 
in 
\cite{random_projections} based on random projections. Basically, this test is based on the idea
of choosing $K$ pairs of random directions, and observing that if $X$ and $Y$ are independent, then the 
projections of $X$ and $Y$ in each one of $K$ pairs of directions are independent. This test is universally
consistent.
To perform this test, it is necessary to previously choose the number of pairs of projections ($K$), and then
$K$ independence hypothesis tests are performed. If at least one of these tests rejects the hypotheses of
independence, then $H_0$ is rejected. To work at the $5 \%$level, in \cite{random_projections}
it is proposed to use a Bonferroni correction, that is, to compute the proportion of $p$-values smaller than
$0.05/K$ to perform each one of the $K$ uni-dimensional tests. We will call the RPK test.
In Table \ref{t7} we present a power comparison at the $5\%$ level,  when $X$ is a realization of a discrete time 
series of length $100$ in three possible scenarios, where there are three alternatives for $Y$ in each scenario.
The performance of RPK is very bad in these cases, and the power using the Bonferroni correction is $0$.
For this reason we present in Table \ref{t7} the power of the RPK test using $0.05/4$ instead of $0.05/K$
for $K=100$ random projections.
Table \ref{t7} shows that our test based on $T_n^{(2,2)}$ outperforms the other tests in the $9$ cases 
considered.
Table \ref{t8} shows a comparison at the $5\%$ level, of the powers in $12$ scenarios in which $X$ and $Y$
are realizations of a continuous time series viewed at $100$ equispaced points in $[0,1].$
In this table, we have considered the RPK test for $K=5$ random projections and we have used the Bonferroni 
correction. We  chose $K=5$ projections because this is the value of $K$ for which the power of the RPK
test reaches its maximum.
Table \ref{t8} shows that our test based on $T_n^{(2,2)}$ or $T_n^{(2,\infty)}$ outperforms the other
competitors in $6$ scenarios, and the HSIC, DCOV and RPK tests have the best
performance in two cases each.

\begin{table*}[h!]\caption{ Comparison at the $5\%$ level of the powers of the $4$
independence tests considered in the case of discrete time series and different sample sizes. The parameters
in the case ARMA$(2,1)$  are $\phi=(0.2,0.5)$ and $\theta =0.2$.   The parameter $\sigma$ in 
$Y=\sqrt{|X|}+\sigma \varepsilon$ denotes the standard deviation of $\sqrt{|X|}$. $\varepsilon$ 
denotes a white noise  
with $\sigma=1$, independent of $X$.}
 \label{t7}

\centering
\begin{tabular}{c|c|ccccc}

    \hline
     $X \sim $ ARMA(2,1) & $n$& RPK  &  HSIC  &   DCOV   & $T_n^{(2,2)}$ & $T_n^{(2,\infty)}$  \\
    \hline 
     $Y=X^{2} +3\varepsilon$ & $30$ &0.533 &0.324 &0.282 &\textbf{0.785} &0.566\\ 
      & $50$ &0.570 &0.381 &0.313 &\textbf{0.975} &0.825 \\
      & $100$ &0.660 &0.537 & 0.377& \textbf{1.000} &0.994 \\
  \hline
     $Y=\sqrt{|X|}+\sigma \varepsilon $  & $30$ &0.549 &0.294 &0.240 & \textbf{0.779} &0.341\\ 
      & $50$ &0.592 &0.364 &0.270 & \textbf{0.976} & 0.427\\
      & $100$ &0.702 &0.572 &0.373 & \textbf{0.921} & 0.860\\
 \hline 
     $Y=\varepsilon X $  & $30$ &0.466 &0.501 &0.467 & \textbf{0.925} &0.498 \\ 
      & $50$ &0.468 &0.583 &0.535 & \textbf{0.996} &0.778\\
      & $100$ &0.473 &0.674 &0.567 & \textbf{1.000} &0.984\\
  \hline
    \hline
     $X \sim $ AR(0.1) & $n$& RPK &  HSIC   &   DCOV  & $T_n^{(2,2)}$ & $T_n^{(2,\infty)}$   \\
    \hline 
     $Y=X^{2} +3\varepsilon$ & $30$ &\textbf{0.484} &0.134 &0.114 & 0.398 &0.236 \\ 
      & $50$ &0.487 &0.132 &0.134 & \textbf{0.592}  &0.465 \\
      & $100$ &0.950 &0.162 &0.131 & \textbf{0.999} & 0.834 \\
  \hline
     $Y=\sqrt{|X|}+\sigma \varepsilon $  & $30$ &0.518 &0.207 &0.182 &\textbf{0.523}&0.114  \\ 
      & $50$ &0.508 &0.222 &0.184 & \textbf{0.810} & 0.157\\
      & $100$ &0.509 &0.273 &0.217 & \textbf{0.698} &0.504 \\
 \hline 
     $Y=\varepsilon X $  & $30$ &0.474 &0.349 &0.331 & \textbf{0.772} &0.345 \\ 
      & $50$ &0.486 &0.380 &0.354 & \textbf{0.945} &0.613\\
      & $100$ &0.507 &0.404 &0.372 & \textbf{1.000} &0.938  \\
  \hline
      \hline
     $X \sim $ AR(0.9) & $n$& RPK  &  HSIC  &   DCOV   & $T_n^{(2,2)}$ & $T_n^{(2,\infty)}$   \\
    \hline 
     $Y=X^{2} +3\varepsilon$ & $30$ &0.886 &0.997 & 0.949&\textbf{1.000} &0.992 \\ 
      & $50$ &0.978 &\textbf{1.000} &0.993 & \textbf{1.000} & \textbf{1.000} \\
      & $100$ &\textbf{1.000} &\textbf{1.000} &\textbf{1.000} &\textbf{1.000} &\textbf{1.000} \\
  \hline
     $Y=\sqrt{|X|}+\sigma \varepsilon $  & $30$ &0.785 &0.887 &0.649 & \textbf{0.903} &0.778 \\ 
      & $50$ &0.924 &0.992 &0.838 & \textbf{0.998} &0.978\\
      & $100$ &\textbf{1.000} &\textbf{1.000} &0.993 & \textbf{1.000} &\textbf{1.000} \\
 \hline 
     $Y=\varepsilon X $  & $30$ &0.560 &0.933 &0.870 & \textbf{1.000} &0.948  \\ 
      & $50$ &0.562 &0.983 &0.945 & \textbf{1.000} & \textbf{1.000}\\
      & $100$ &0.570 &\textbf{1.000} &0.985 & \textbf{1.000} &\textbf{1.000}\\
  \hline

\end{tabular}
\end{table*} 

\begin{table*}[p]\caption{ Comparison at the $5\%$ level of the powers of the $4$ 
		independence tests considered in the case of continuous time series and different sample sizes. 
		Bm and fBm denote a Brownian motion and fractional Brownian motion with $H=0.7$.
		$\varepsilon$ and $\varepsilon'$ are independent white noises  
		with $\sigma=1$ (and independent of $X$).}
	\label{t8}
	
	\centering
	\begin{tabular}{c|c|ccccc}
		
		\hline
		$X \sim $ Bm & $n$& RPK  &  HSIC  &   DCOV   & $T_n^{(2,2)}$ & $T_n^{(2,\infty)}$  \\
		\hline 
		$Y=X^{2} +3\varepsilon$ & $30$ &0.767 &\textbf{0.815} &0.596 &0.757 & 0.570 \\ 
		& $50$ &0.951 &\textbf{0.977} &0.829 &0.947 &0.836 \\
		& $80$ &0.999 &\textbf{1.000} &0.977 &0.994 &0.968 \\
		\hline
		$Y=\sqrt{|X|}+\varepsilon $  & $30$ &\textbf{0.954} &0.906 &0.550 &0.519 &0.224 \\ 
		& $50$ &\textbf{0.998} &0.996 &0.862 &0.802 &0.416\\
		& $80$ &\textbf{1.000} &\textbf{1.000} &0.992 &0.923 &0.834\\
		\hline 
		$Y=\varepsilon X+3 \varepsilon ' $  & $30$ &0.054 &0.099 &0.118 &\textbf{0.421} &0.185  \\ 
		& $50$ &0.050 &0.119 &0.125 &\textbf{0.619} &0.437 \\
		& $80$ &0.056 &0.128 &0.126 &\textbf{0.839 }& 0.648 \\
		\hline
		\hline
		$Y \sim $ OU  & $30$ &0.765 &0.770 &0.826 &0.651 & \textbf{0.956}\\ 
		& $50$ &0.927 &0.977 &0.988 &0.906 &\textbf{1.000} \\
		& $80$ &0.992 &\textbf{1.000} &\textbf{1.000 }&0.986 &\textbf{1.000} \\
		\hline 
		$Y \sim$ OU(2)  & $30$ &0.574 &\textbf{0.965} &0.961 &0.374 &0.744  \\ 
		& $50$ &0.790 &\textbf{1.000} &\textbf{1.000} &0.584 &0.947 \\
		& $80$ &0.938 &\textbf{1.000} &\textbf{1.000} &0.880 &0.998\\
		\hline
		$X \sim $ fBm & $n$& RPK  &  HSIC  &   DCOV   & $T_n^{(2,2)}$ & $T_n^{(2,\infty)}$   \\
		\hline 
		$Y=X^{2} +3\varepsilon$ & $30$ &\textbf{0.742} &0.728 &0.544 &0.732 &0.546 \\ 
		& $50$ &0.962 &0.946 &0.814 &0.883 &0.758  \\
		& $80$ &0.997 &0.998 &0.970 & 0.987 &0.922 \\
		\hline
		$Y=\sqrt{|X|}+\varepsilon $  & $30$ &\textbf{0.962} &0.925 &0.579 &0.580 &0.266\\ 
		& $50$ &\textbf{1.000} &0.999 &0.902 &0.830 &0.440 \\
		& $80$ &\textbf{1.000} &\textbf{1.000} &\textbf{1.000} &0.930 &0.680\\
		\hline 
		$Y=\varepsilon X+3 \varepsilon ' $  & $30$ &0.054 &0.109 &0.125 &\textbf{0.366} &0.246  \\ 
		& $50$ &0.053 &0.098 &0.131 &\textbf{0.586} &0.404\\
		& $80$ &0.062 &0.119 &0.132 &\textbf{0.804 }&0.634 \\
		\hline
		$Y \sim $ FOU  & $30$ &0.585 &0.394 &0.509 &0.460 &\textbf{0.806} \\ 
		& $50$ &0.760 &0.705 &0.801 &0.581 & \textbf{0.978}\\
		& $80$ &0.913 &0.956 &0.984 &0.707 & \textbf{1.000}\\
		\hline 
		$Y \sim$ FOU(2)  & $30$ &0.443 &0.847 &\textbf{0.909} &0.206 &0.426 \\ 
		& $50$ &0.665 &0.987 &\textbf{0.996} &0.326 &0.722 \\
		& $80$ &0.820 &\textbf{1.000} &\textbf{1.000} &0.542 &0.928 \\
		\hline
		\hline
		$X \sim$ OU($\lambda_1$)  & $30$ &0.509 &0.445 &0.462 &0.304 &\textbf{0.672} \\ 
		$Y \sim$ OU($\lambda_2$)  & $50$ &0.717 &0.777 &0.769 &0.448 &\textbf{0.888} \\ 
		& $80$ &0.889 &0.978 &0.965 &0.582 &\textbf{0.980} \\ 
		\hline
		$X \sim$ FOU($\lambda_1$)  & $30$ &0.305 &0.180 &0.175  &0.106 &\textbf{0.332} \\ 
		$Y \sim$ FOU($\lambda_2$)  & $50$ &0.475 &0.299 &0.313 &0.204 &\textbf{0.524} \\ 
		& $80$ &0.644 &0.596 &0.574 &0.210 &\textbf{0.738} \\ 
		\hline
	\end{tabular}
\end{table*}

\begin{table*}[h]
	\caption {$p$-values for the test between couples of the $Z$'s.}\label{pvalueZs}
	\begin{center}

	\begin{tabular}{|c|cccc|}
		\hline
		& $Z_2$  & $Z_3$  & $Z_4$  & $Z_5$ \\
		\hline 
		$Z_1$  &0.000  & 0.000 &0.109 &0.000 \\
		\hline
		$Z_2$  &  & 0.000 &0.394 &0.000 \\
		\hline
		$Z_3$  &   &  &0.373 &0.000 \\
		
		\hline
		$Z_4$  & & & &0.403 \\
		\hline
	\end{tabular}
	\end{center}             
\end{table*}

\section{Applications to Real Data}
In this section we will see a couple of applications to  meteorological data. Two applications to 
economic data can be found in 
\cite{Kalemke_implementa}.

\subsection{Temperature, humidity, wind and evaporation}
In this subsection we consider the meteorological data given in Table 7.2 of \cite{Rencher}. The data are 
$46$ observations grouped
into $11$ variables defined as follows: $Y_1=$``maximum daily air temperature'', $Y_2=$``minimum daily air temperature'', 
$Y_3=$``integrated area under daily air temperature curve'', $Y_4=$``maximum daily soil temperature'', $Y_5=$``minimum daily soil temperature'',
$Y_6=$``integrated area under daily soil temperature curve'', $Y_7=$``maximum daily relative humidity'', $Y_8=$``minimum daily relative humidity'',
$Y_9=$``integrated area under daily humidity curve'', $Y_{10}=$``total wind (in miles per day)'' and $Y_{11}=$``evaporation''. We consider the vectors
$Z_{1}=(Y_1,Y_2,Y_3)$, $Z_2=(Y_4,Y_5,Y_6)$, $Z_3=(Y_7,Y_8,Y_9)$ and the variables  $Z_4=Y_{10}$ and 
$Z_5=Y_{11}$. Taking into account 
 what was seen in the previous section, that there are no important differences between the use of 
 $T_n^{(1)}$,  $T_n^{(2)}$
 or $T_n^{(\infty)}$ as a test statistic,  we apply our independence test
between couples of $Z$'s using $T_n^{(2,2)}$ as the test statistic. In Table \ref{pvalueZs} we show 
the $p$-values of our test in each case.
In Figure \ref{rencher_dependogram} we show the dependogram of order $2$ of the mutual independence test of 
the $Z$'s, that is, the critical values at $5 \%$ and $10 \%$
and the value of our statistic. The approximate $p$-values and critical values were calculated under 
$m=1000$ replications by a permutation method
as has been suggested in \cite{Fernandez}. The test concludes that $Z_1, Z_2, Z_3$ and $ Z_5$  are pairwise
independent, but the wind ($Z_4$) does
not exhibit any dependence with any of the other variables. On the one hand, these conclusions are equivalent 
to those obtained in \cite{Bilodeau}. 
On the other hand, we observe that in the cases in which the test does not reject $H_0$, the difference between the observed and the critical values
is very large. The largest observed value, is  $0.2887$,  reached in the case $Z_1, Z_2$.

\begin{figure}[h!]
\begin{center}
\begin{tabular}{ll}
\includegraphics[scale=0.6]{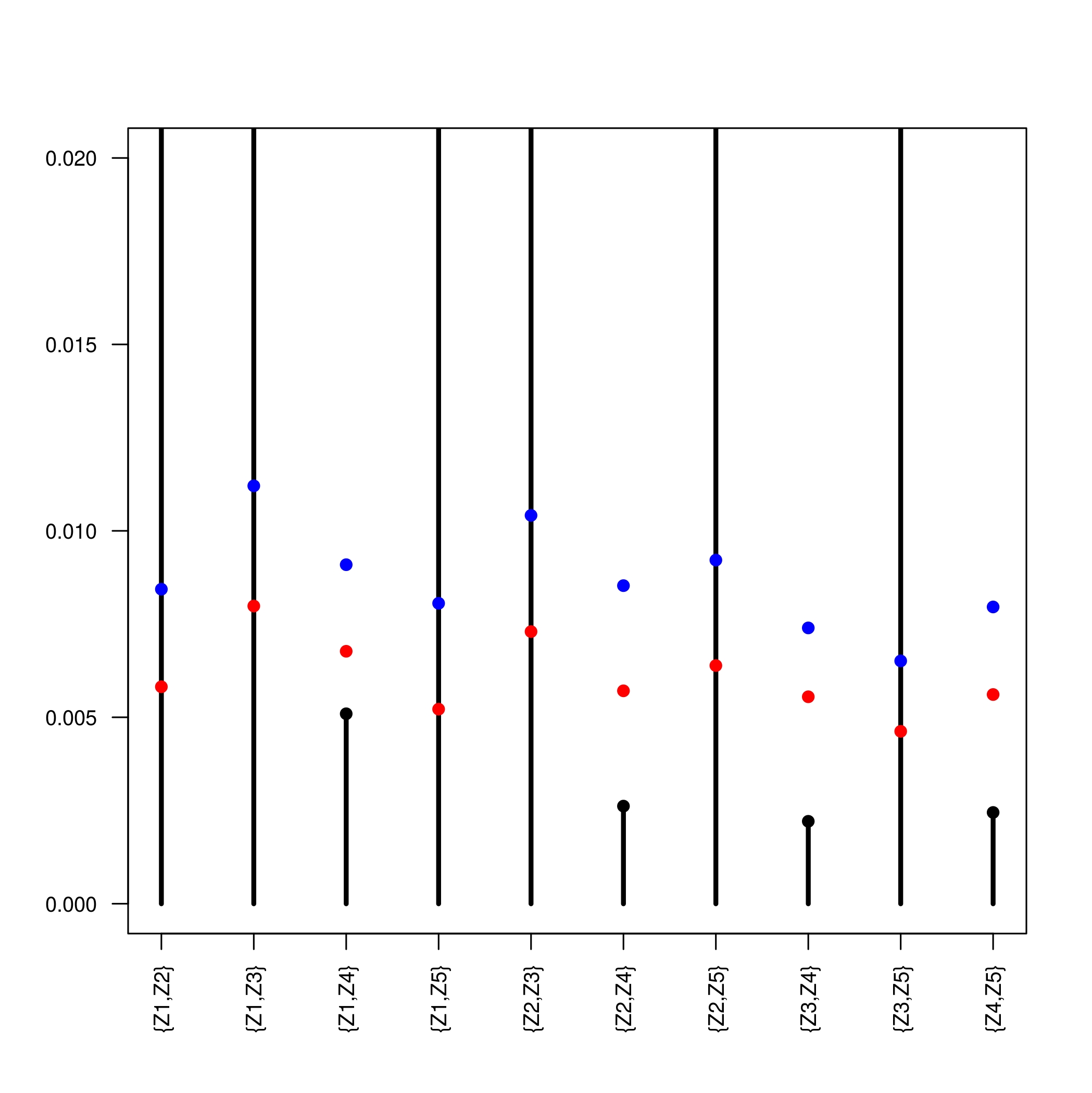}

\end{tabular}
\end{center}
\caption{Critical values at $5 \%$ (blue), $10 \%$ (red) and observed values (black) for the pairwise independence test between the $Z's$ variables.
}\label{rencher_dependogram}
\end{figure}

\subsection{Temperature, westbound wind, eastbound wind}
In this subsection we will consider the data set formed by the forecast temperature $(T)$, 
 westbound wind $(U)$ and eastbound wind $(V)$ at $850$ hPa (around $1200$m above sea level) from 
 each day from January 2012 to December 2012. There are a total of $341$ forecasts due to 
 the fact that 
 $25$ data points are missing. The numerical domain is shown in  Figure \ref{grilla} and consists of a total of $117\times 75=8775$ geographical points.
 The time horizon of the forecasts is 
 $24$ hours, and they are for 0:00 GMT hour of each day. The numerical simulations were obtained 
 using the WRF regional model \cite{Skamarock}, and the initial and lateral boundary conditions were
 obtained from the NCEP Global Forecast System, as in \cite{Cazes}.
 If we consider $(U_1,V_1,T_1), (U_2,V_2,T_2),...,(U_{341},V_{341},T_{341})$ where $U_i, V_i, T_i \in \mathbb{R}^{8775}$ for all $i=1,2,3,...,341$,
 the $p$-values for the independence test between $U$ and $V$ is equal to zero, and so on for the test
 between $U$ and $T$, and 
 $V$ and $T$. This is expected because for each geographical point $i$, the variables $U_i, V_i$ and $T_i$ 
 are pairwise dependent.
 Now we consider (for each day) every vector $U \in \mathbb{R}^{8775}$ to be decomposed
 as $U=(U_1,U_2,...,U_{75})$ where $U_i \in \mathbb{R}^{117}$.
 In this form, each $U_i$ represents the forecast of the $117$ geographical points at latitude $i$ and can be
 seen as a discretization of a curve
 at latitude $i$, ($U(i)$). Here, $i=1$ indicates the southeastern most latitude given in 
 Figure \ref{grilla} and $i=117$ the northeastern most latitude.  We consider the first 
 $30$ forecasts, corresponding to January 2012. In this way, we get a sample of $30$ curves for each 
 latitude $i$, 
 and we will test the mutual independence between $U_i$ and $U_j$ for $i=1,2,3,...,38$ and $j=76-i$.
 We decompose $T$ and $V$ analogously. It is to be expected that, at least for small values of $i$, the variables $U_i$ and $U_j$ would be independent, due 
 to the geographical distance, and  the same for the variables $V$ and $T$. In 
 Figures \ref{uvt}, \ref{uv}, \ref{ut} and \ref{vt} we show the dependograms
 for the independence test, using $T_n^{(2,1)}$ and $T_n^{(2,\infty)}$ statistics, between $U_i$ and 
 $U_{76-i}$ for each $i=1,2,...,38$ and the same for the variables $V$, $T$ and the other combinations between 
 $U,V$ and $T$.  In Figure \ref{uvt}, similar results between $T_n^{(2,1)}$ and $T_n^{(2,\infty)}$ are shown. 
 However, in the case of $U_i$ and $U_j$,
 $T_n^{(2,\infty)}$ detects the dependence in more cases than $T_n^{(2,1)}$. Both tests show that when $i$ and $j=76-i$ are close, then the variables
 $U_i$ and $U_j$ are dependent. The same occurs with $V_i$, $V_j$ and $T_i$, $T_j$. Also the geographical 
 region in which the vectors are 
 dependent is longer for $T$ than for $U$ and $V$. Figure \ref{uv} shows that $T_n^{(2,1)}$ performs better 
 than $T_n^{(2,\infty)}$ because for 
 $i \geq 32$ the test based on $T_n^{(2,1)}$ detects a dependence for both cases: $U_i, V_j$ and $V_i, U_j$.
 Figures \ref{ut} and \ref{vt} show that
 the tests based on  $T_n^{(2,1)}$ and $T_n^{(2,\infty)}$ perform similarly. 
 Also, still being geographically close, 
 the vectors $U_i$ and $T_j$ are independent.
 However both tests detect a dependence between $T:i$ and $U_j$ for $i=21$ to $i=27$ (Figure \ref{ut}). 
 Figure \ref{vt} shows that in most cases, $T_i$
 and $V_j$ are dependent, while for $V_i$ and $T_j$ the test does not detect a dependence except for the cases 
 in which $i$ and $j$ are close.

  \begin{figure}[h!]
\begin{center}
\begin{tabular}{ll}
\includegraphics[scale=0.20]{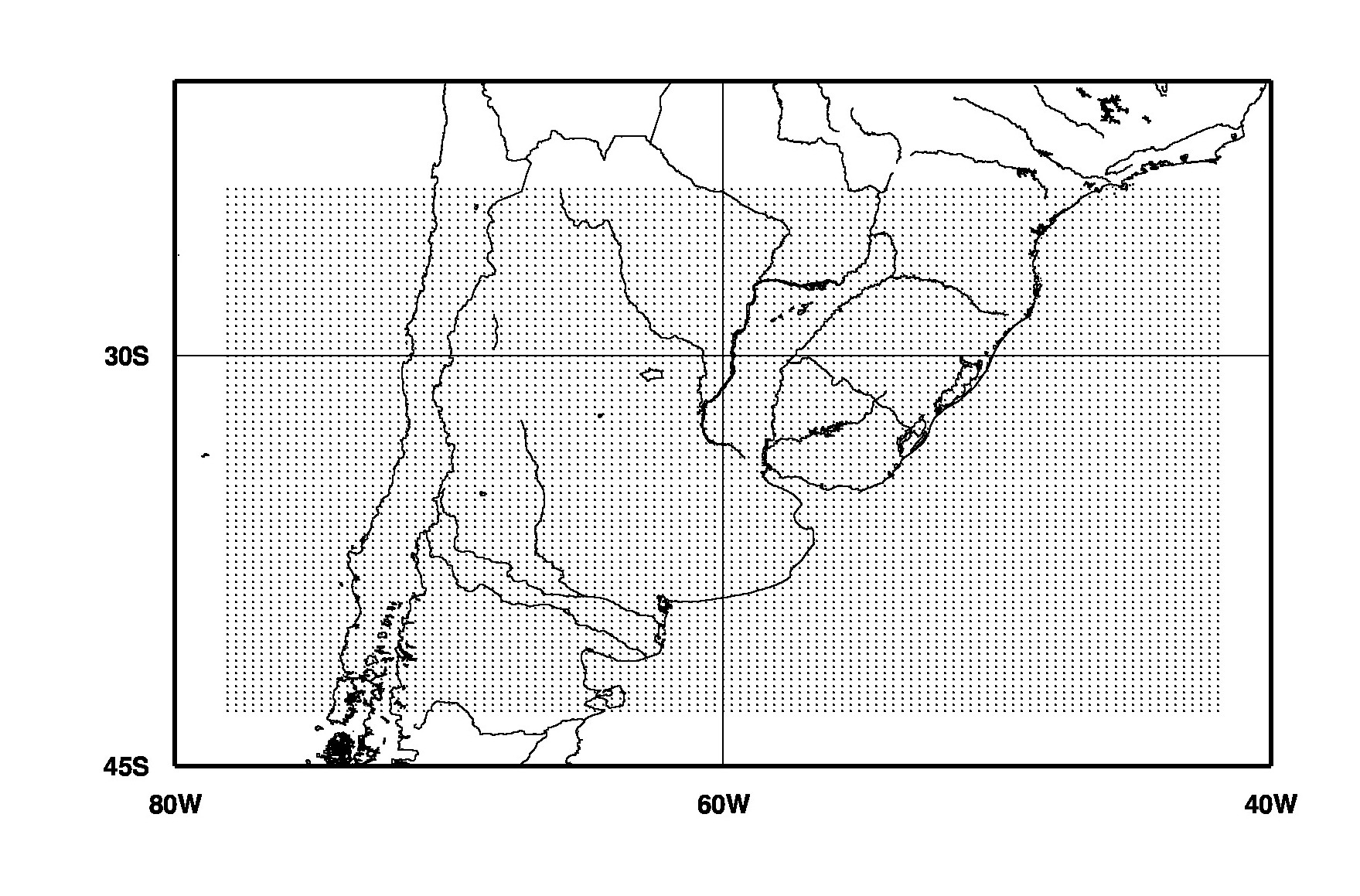}

\end{tabular}
\end{center}

\caption{$117\times 75=8775$ geographical points where the daily forecasts are  made.}
\label{grilla}
\end{figure}
 
 \begin{figure}[h!]
\begin{center}
\begin{tabular}{ll}
\includegraphics[scale=0.8]{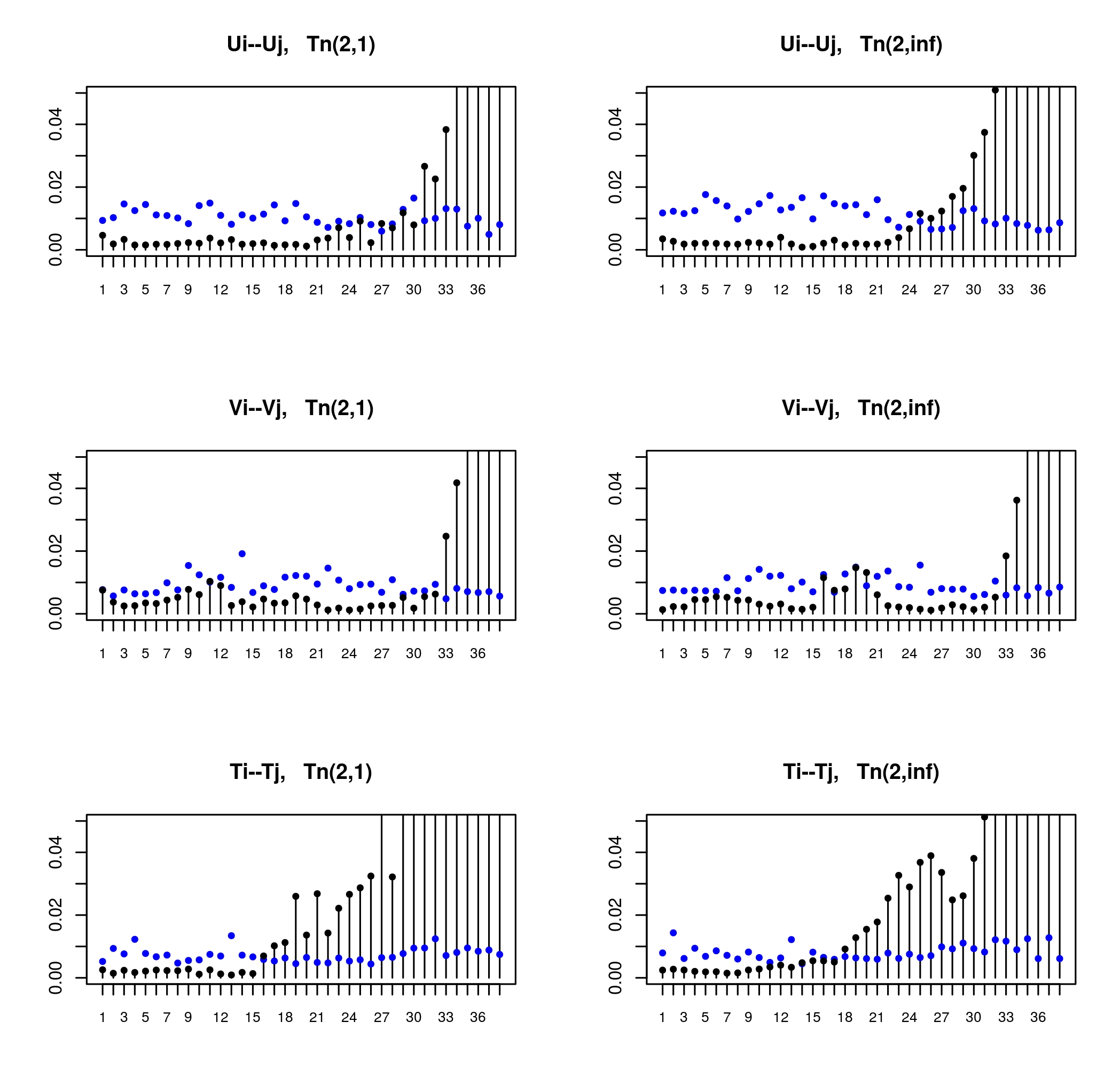}

\end{tabular}
\end{center}

\caption{Comparison between dependograms for $T_n^{(2,1)}$ and $T_n^{(2,\infty)}$ intra $U,V$ and $T$.
}\label{uvt}
\end{figure}

 \begin{figure}[h!]
\begin{center}
\begin{tabular}{ll}
\includegraphics[scale=0.8]{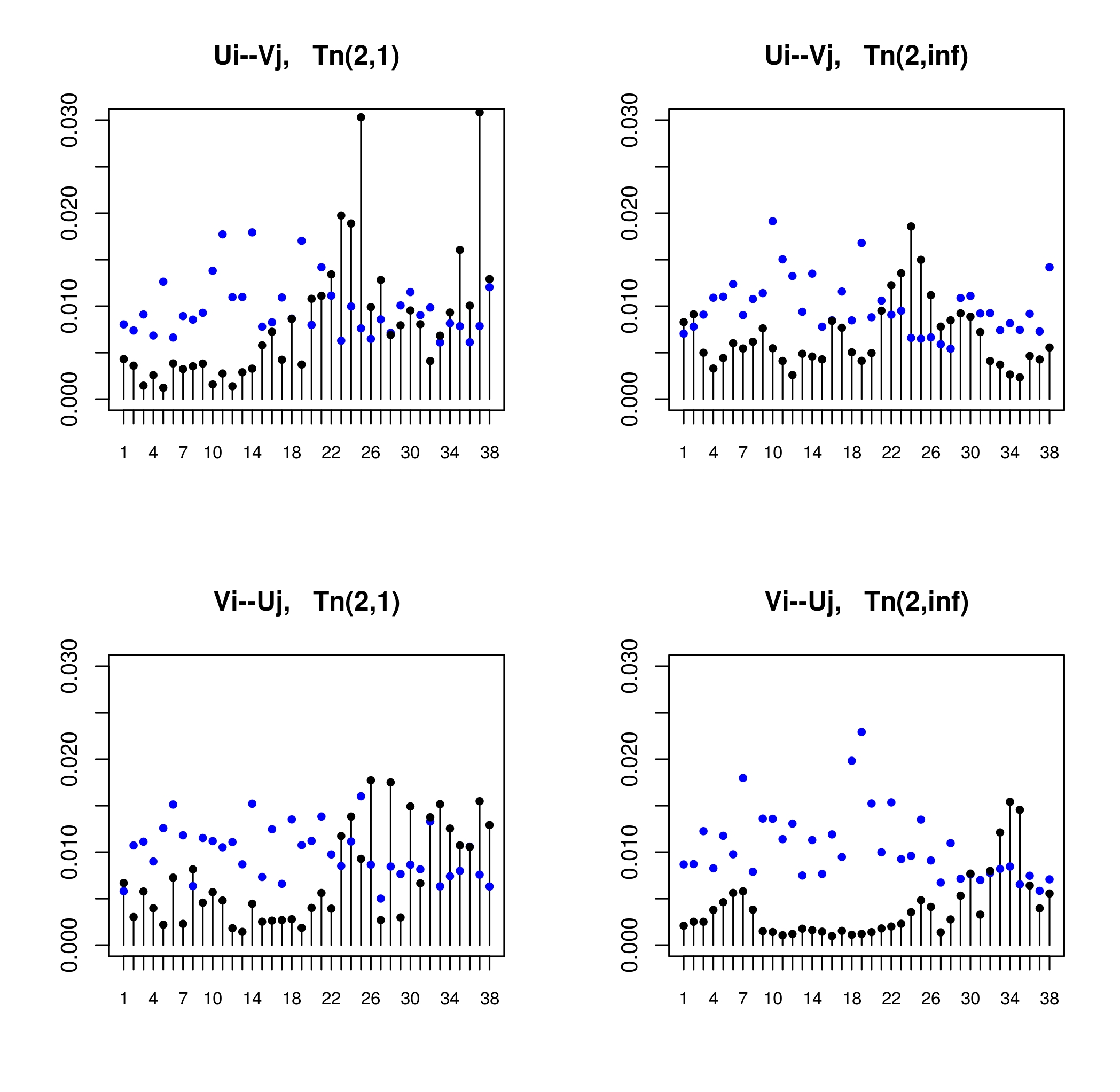}

\end{tabular}
\end{center}

\caption{Comparison between dependograms for $T_n^{(2,1)}$ and $T_n^{(2,\infty)}$ between $U$ and $V$.
}\label{uv}
\end{figure}

 \begin{figure}[h!]
\begin{center}
\begin{tabular}{ll}
\includegraphics[scale=0.8]{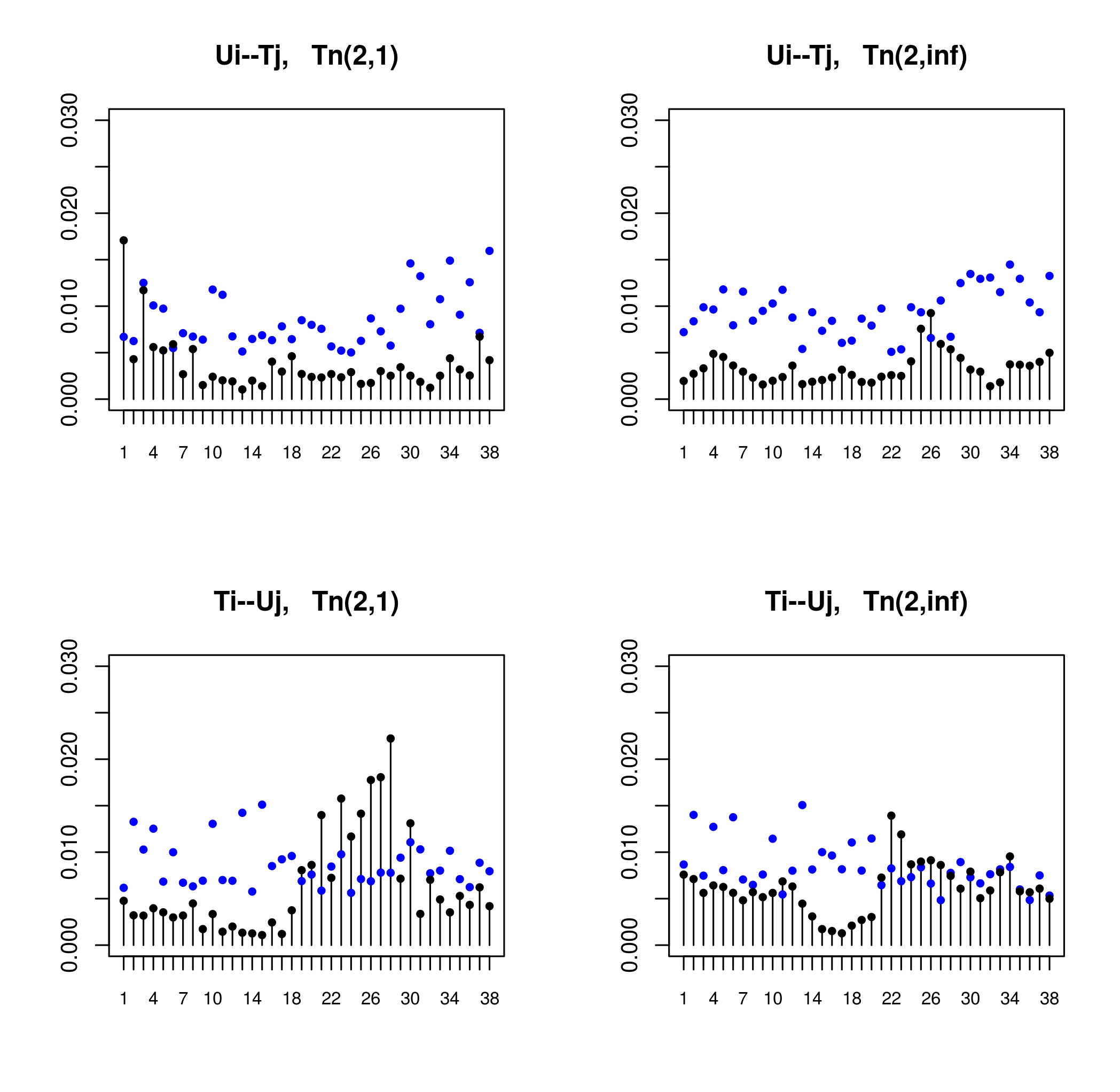}

\end{tabular}
\end{center}

\caption{Comparison between dependograms for $T_n^{(2,1)}$ and $T_n^{(2,\infty)}$ between $U$ and $T$.
}\label{ut}
\end{figure}

 \begin{figure}[h!]
\begin{center}
\begin{tabular}{ll}
\includegraphics[scale=0.8]{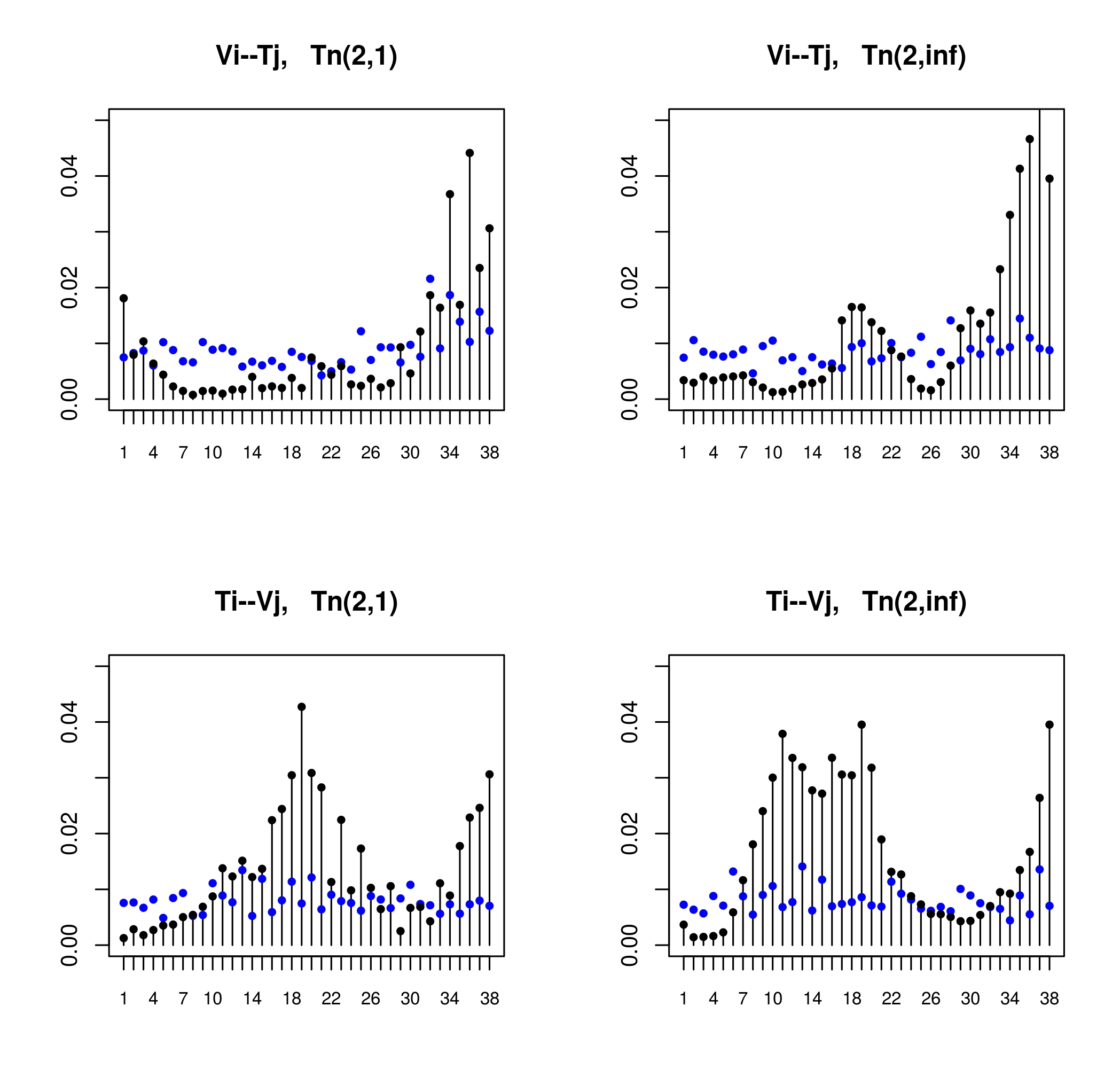}

\end{tabular}
\end{center}

\caption{Comparison between dependograms for $T_n^{(2,1)}$ and $T_n^{(2,\infty)}$ between $V$ and $T$.
}\label{vt}
\end{figure}

\section{Conclusions}
In this paper we have proposed several variants to perform the independence test
based on recurrence rates. We have shown how to calculate the test statistic in each one of
these cases.
When $X$ and $Y$ lie in high dimensional spaces, we have shown that the test
performs better as the distance function considered goes from the $L^{1} (l^{1})$ distance to 
the $L^{\infty} (l^{\infty})$ distance in some cases and in the opposite direction in other cases. Therefore,
the test statistic using the $L^{1} (l^{1})$ distance
and the $L^{\infty} (l^{\infty})$ distances to cover both possibilities can be suggested.
From simulations we obtain that in high dimension, our test clearly outperforms the 
competitors and widely used tests in almost all the alternatives considered. 
The test was performed on two data sets including small and large sample sizes and 
we have shown that in both cases the application of the test allows us to obtain interesting conclusions.
Taking this together with the simulations presented in \cite{Fernandez}, we can conclude that the 
independence test based on recurrence rates has very good performance for random variables, random
vectors, and also for random elements lying in high dimensional spaces.

\section{Proofs}

\begin{proof}[Proof of Proposition \ref{T2}]
\[\]
 The proof of this proposition is found in \cite{Fernandez}.
\end{proof}

\begin{proof}[Proof of Proposition \ref{T1}]
	\[ \]
	
 We re-order $d\left( X_{i},X_{j}\right) $ with $\left( i,j\right) \in I_{2}^{n}$
in the form $Z_{1},Z_{2},...,Z_{n}.$ Assume that $Z_{1}<Z_{2}<...<Z_{N},$
and we will use $T_{1},T_{2},...,T_{N}$ to denote the values of $d\left(
Y_{i},Y_{j}\right) $ using the same indexing. We also write $T_{1}^{\ast 
},T_{2}^{\ast },...,T_{N}^{\ast }$ for the order statistics of $T^{\prime }s.$

\begin{equation*}
\int_{0}^{+\infty }\int_{0}^{+\infty }\left\vert RR_{n}^{X,Y}\left(
r,s\right) -RR_{n}^{X}\left( r\right) RR_{n}^{Y}\left( s\right) \right\vert
g_{1}\left( r\right) g_{2}\left( s\right) drds=
\end{equation*}%
\begin{equation*}
\frac{1}{N}\int_{0}^{+\infty }g_{2}(s)ds\times \end{equation*}
\begin{equation*}\int_{0}^{+\infty }\left\vert
\sum_{i\neq j}\mathbf{1}_{\left\{ d\left( X_{i},X_{j}\right) <r,\text{ }%
d\left( Y_{i},Y_{j}\right) <s\right\} }-\frac{1}{N}\sum_{i\neq j}\mathbf{1}%
_{\left\{ d\left( X_{i},X_{j}\right) <r,\right\} }\sum_{h\neq k}\mathbf{1}%
_{\left\{ d\left( Y_{h},Y_{k}\right) <s\right\} }\right\vert g_{1}\left(
r\right) dr=
\end{equation*}%
\begin{equation}\label{formulaT1}
\frac{1}{N}\int_{0}^{+\infty }g_{2}(s)ds\int_{0}^{+\infty }\left\vert
\sum_{i=1}^{N}\mathbf{1}_{\left\{ Z_{i}<r,\text{ }T_{i}<s\right\} }-\frac{1}{%
N}\sum_{i=1}^{N}\mathbf{1}_{\left\{ Z_{i}<r\right\} }\sum_{j=1}^{N}\mathbf{1}%
_{\left\{ T_{j}<s\right\} }\right\vert g_{1}\left( r\right) dr.
\end{equation}%
Observe that 
\begin{equation*}
\int_{0}^{+\infty }\left\vert \sum_{i=1}^{N}\mathbf{1}_{\left\{ Z_{i}<r,%
\text{ }T_{i}<s\right\} }-\frac{1}{N}\sum_{i=1}^{N}\mathbf{1}_{\left\{
Z_{i}<r\right\} }\sum_{j=1}^{N}\mathbf{1}_{\left\{ T_{j}<s\right\}
}\right\vert g_{1}\left( r\right) dr=
\end{equation*}%
\begin{equation*}
\sum_{h=1}^{N-1}\int_{Z_{h}}^{Z_{h+1}}\left\vert \sum_{i=1}^{h}\mathbf{1}%
_{\left\{ T_{i}<s\right\} }-\frac{h}{N}\sum_{j=1}^{N}\mathbf{1}_{\left\{
T_{j}<s\right\} }\right\vert g_{1}\left( r\right) dr=
\end{equation*}%
\begin{equation*}
\sum_{h=1}^{N-1}\left\vert \sum_{i=1}^{h}\mathbf{1}_{\left\{ T_{i}<s\right\}
}-\frac{h}{N}\sum_{j=1}^{N}\mathbf{1}_{\left\{ T_{j}<s\right\} }\right\vert
\left( G_{1}\left( Z_{h+1}\right) -G_{1}\left( Z_{h}\right) \right) .
\end{equation*}

Then, (\ref{formulaT1}) is equal to 
\begin{equation*}
\frac{1}{N}\sum_{h=1}^{N-1}\left( G_{1}\left( Z_{h+1}\right) -G_{1}\left(
Z_{h}\right) \right) \int_{0}^{+\infty }\left\vert \sum_{i=1}^{h}\mathbf{1}%
_{\left\{ T_{i}<s\right\} }-\frac{h}{N}\sum_{j=1}^{N}\mathbf{1}_{\left\{
T_{j}<s\right\} }\right\vert g_{2}(s)ds=
\end{equation*}%
\begin{equation*}
\frac{1}{N}\sum_{h=1}^{N-1}\left( G_{1}\left( Z_{h+1}\right) -G_{1}\left(
Z_{h}\right) \right) \sum_{j=1}^{N-1}\int_{T_{j}^{\ast }}^{T_{j+1}^{\ast
}}\left\vert \sum_{i=1}^{h}\mathbf{1}_{\left\{ T_{i}<s\right\} }-\frac{h}{N}%
\sum_{j=1}^{N}\mathbf{1}_{\left\{ T_{j}<s\right\} }\right\vert g_{2}(s)ds=
\end{equation*}%
\begin{equation*}
\frac{1}{N}\sum_{h=1}^{N-1}\left( G_{1}\left( Z_{h+1}\right) -G_{1}\left(
Z_{h}\right) \right) \sum_{j=1}^{N-1}\int_{T_{j}^{\ast }}^{T_{j+1}^{\ast
}}\left\vert c(h,j)-\frac{jh}{N}\right\vert g_{2}(s)ds=
\end{equation*}%
\begin{equation*}
\frac{1}{N}\sum_{h,j=1}^{N-1}\left( G_{1}\left( Z_{h+1}\right) -G_{1}\left(
Z_{h}\right) \right) \left( G_{2}\left( T_{j+1}^{\ast }\right) -G_{2}\left(
T_{j}^{\ast }\right) \right) \left\vert c(h,j)-\frac{jh}{N}\right\vert ,
\end{equation*}%
where $c(h,j)=\sum_{i=1}^{h}\mathbf{1}_{\left\{ T_{i}<T_{j+1}^{\ast
}\right\} }$ is the number of elements of the vector $\left(
T_{1},T_{2},...,T_{h}\right) $ that are less than $T_{j+1}^{\ast }$ for $%
h,j=1,2,3,...,N-1.$ Thus, 
\begin{equation*}
T_{n}^{\left( 1\right) }=\frac{\sqrt{n}}{N}\sum_{h,j=1}^{N-1}\left(
G_{1}\left( Z_{h+1}\right) -G_{1}\left( Z_{h}\right) \right) \left(
G_{2}\left( T_{j+1}^{\ast }\right) -G_{2}\left( T_{j}^{\ast }\right) \right)
\left\vert c(h,j)-\frac{jh}{N}\right\vert .
\end{equation*}%
\end{proof}

\begin{proof}[Proof of Proposition \ref{Tinf}]
\[ \]
In accordance with Steps 1 and 2, we put $N=n(n-1)$ and re-order $\left\{ d\left( X_{i},X_{j}\right) \right\} _{i\neq j}$ as $%
Z_{1},Z_{2},...,Z_{N}$ such that $Z_{1}<Z_{2}<...<Z_{N}$ and $\left\{
d\left( Y_{i},Y_{j}\right) \right\} _{i\neq j}$ as $T_{1},T_{2},...,T_{N}$
maintaining the same indexing as $Z^{\prime }s$ (that is, if $d\left(
X_{i},X_{j}\right) =Z_{h}$, then $d\left( Y_{i},Y_{j}\right) =T_{h}$).
Observe that, to compute $T_n^{(\infty)}(r,s)$ for all $r,s>0$ it is enough to compute
$T_n^{(\infty)}(Z_{i},T_j^{*})$ for every $i,j=1,2,...,N.$ Then, the result follows immediately 
from  Steps 4 and 5.
 
\end{proof}

\textbf{Acknowledgements}
We would like to express our gratitude to Leonardo Moreno for his explanations about the independence test based on 
random projections and for giving us the  R code to use it, and to Gabriel Cazes for giving us the data
sets and the map of Figure \ref{grilla}.

\end{document}